\newtheorem{lemma}{Lemma}
\newtheorem{theorem}{Theorem}
\newtheorem{conjecture}{Conjecture}
\renewcommand{\theenumi}{\arabic{enumi}}
\renewcommand{\p@enumii}{\theenumi.}
\newcommand{\eps}{\varepsilon}
\newcommand{\cmatch}{c_{{}_{\mathrm{{match}}}}}
\newcommand{\cdel}{c_{{}_{\mathrm{{del}}}}}
\begin{document}

\title{Tree Edit Distance Cannot be Computed in Strongly Subcubic Time (unless APSP can)}

\author{Karl Bringmann\thanks{Max Planck Institute for Informatics, Saarland Informatics Campus
    }
    \and
  Pawe\l{} Gawrychowski\thanks{University of Haifa. Partially supported by  the Israel Science Foundation grant 794/13.
  }
\and
 Shay Mozes\thanks{IDC Herzliya. Partially supported by  the Israel Science Foundation grant 794/13.
 }
\and
  Oren Weimann$^\dagger$
}

\date{}
\maketitle
\begin{abstract}
	The edit distance between two rooted ordered trees with $n$ nodes labeled from an alphabet~$\Sigma$ is the minimum cost of transforming one tree into the other by a sequence of elementary operations consisting of deleting and relabeling existing nodes, as well as inserting new nodes. 
Tree edit distance is a well known generalization of string edit distance. The fastest known algorithm for tree edit distance runs in cubic $O(n^3)$ time and is based on a similar dynamic programming solution as string edit distance. 
In this paper we show that a truly subcubic $O(n^{3-\varepsilon})$ time algorithm for tree edit distance is unlikely: For $|\Sigma| = \Omega(n)$, a truly subcubic algorithm for tree edit distance implies a truly subcubic algorithm for the all pairs shortest paths problem. 	For $|\Sigma| = O(1)$, a truly subcubic  algorithm for tree edit distance implies an $O(n^{k-\varepsilon})$ algorithm for finding a maximum weight $k$-clique. 

Thus, while in terms of upper bounds string edit distance and tree edit distance are highly related, in terms of lower bounds string edit distance exhibits the hardness of the strong exponential time hypothesis [Backurs, Indyk STOC'15] whereas tree edit distance exhibits the hardness of all pairs shortest paths. 
Our result provides a matching conditional lower bound for one of the last remaining classic dynamic programming problems.

\end{abstract}

\section{Introduction}

Tree edit distance is the most common similarity measure between labelled trees. Algorithms for computing the tree edit distance are being used in a multitude of applications in various domains including computational biology~\cite{Gus,Wat,Bille2005,ShapiroZ90}, structured text and data processing (e.g., XML)~\cite{FerraginaFOCS2005,BKG03,XML3}, programming languages and compilation~\cite{Hoffmann82}, computer vision~\cite{BellandoK99,Computervision}, character recognition~\cite{Rico-JuanM03}, automatic grading~\cite{AlurDGDV}, answer extraction~\cite{YaoDCC13}, and the list goes on and on.

Let $F$ and $G$ be two rooted
trees with a left-to-right order among siblings and where each
vertex is assigned a label from an alphabet $\Sigma$. The edit
distance between $F$ and $G$ is the minimum cost of transforming
$F$ into $G$ by a sequence of elementary operations (at most one operation per node): changing the label of a node $v$, deleting a node $v$ and setting the children of $v$ as the children of $v$'s parent (in the place of $v$ in the left-to-right order), and inserting a node $v$ (the complement of delete\footnote{Since a deletion in $F$ is
equivalent to an insertion in $G$ and vice versa, we can focus on
finding the minimum cost of a sequence of just deletions and
relabelings in both trees that transform $F$ and $G$ into isomorphic
trees.
}). See Figure~\ref{Fig:Operations}.  
The cost of these elementary operations is given by two
functions: $\cdel(a)$ is the cost of
deleting or inserting a vertex with label $a$, and
$\cmatch(a,b)$ is the cost of changing the label of a
vertex from $a$ to $b$. 

\begin{figure}[h!]
\begin{center}
\includegraphics[scale=0.43]{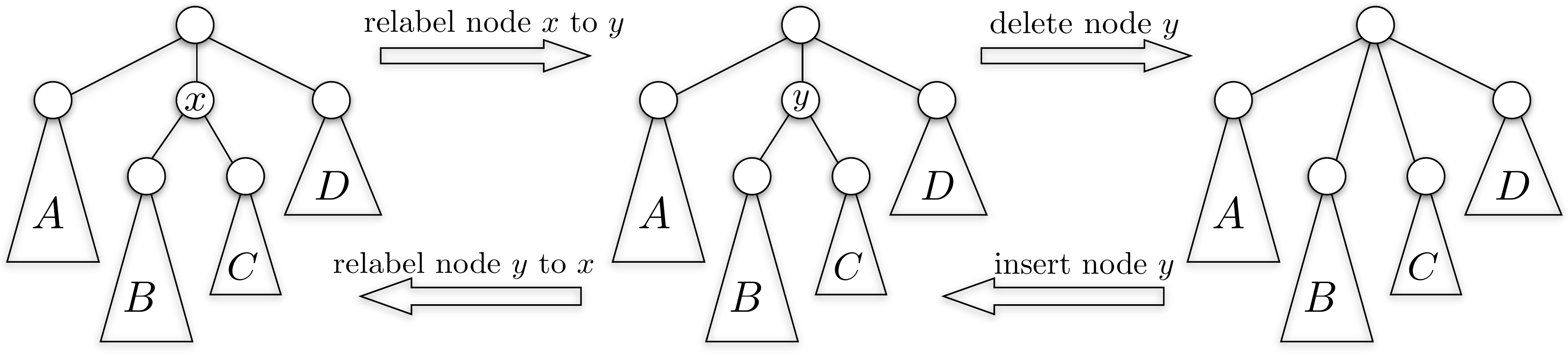}
\caption{\label{Fig:Operations} The three editing operations on a tree
with vertex labels.}
\end{center}
\end{figure}

The Tree Edit Distance (TED) problem was introduced by Tai in the late 70's~\cite{Tai} as a generalization of the well known string edit
distance problem~\cite{StringED}. Since then it was extensively studied.
Tai gave an $O(n^6)$-time algorithm for TED which was subsequently
improved to $O(n^4)$ in the late
80's~\cite{Shasha}, then to $O(n^3 \log n)$ in the late 90's~\cite{Klein}, and  finally to $O(n^3)$ in 2007~\cite{DMRW}. Many other algorithms have been developed for TED, see the popular survey of Bille~\cite{Bille2005} (this survey alone has more than 600 citations) and the books of Apostolico and Galil~\cite{ApostolicoBook} and Valiente~\cite{ValienteBook}. For example, Pawlik and Augsten~\cite{PawlikA} recently defined a class of dynamic programming algorithms that includes all the above algorithms for TED, and developed an algorithm whose performance on any input is not worse (and possibly better) than that of any of the existing algorithms.
Other attempts achieved better running time by restricting the  edit operations or the scoring schemes~\cite{XML3,Selkow,Constrained,Unit}, or by resorting to approximation~\cite{akutsu2,AratsuHK10}.
However, in the worst case no algorithm currently beats $\Omega(n^3)$ (not even by a logarithmic factor). 

Due to their importance in practice, many of the algorithms described above, as well as additional heuristics and  optimizations were studied experimentally~\cite{PawlikA,IvkinThesis}. 
Solving tree edit distance in truly subcubic $O(n^{3-\varepsilon})$ time is arguably one of the main open problems in pattern matching, and the most important one in tree pattern matching.

The fact that, despite the significant body of work on this problem, no truly subcubic time algorithm has been found, leads to the following natural conjecture that no such algorithm exists.

\begin{conjecture}\label{conj:TED}
For any $\epsilon >0$ Tree Edit Distance on two $n$-node trees cannot be solved in $O(n^{3-\epsilon})$ time.
\end{conjecture}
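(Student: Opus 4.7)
The plan is to establish Conjecture~\ref{conj:TED} conditionally, via fine-grained reductions to Tree Edit Distance (TED) from two problems believed to require cubic (or higher) time: All-Pairs Shortest Paths (APSP), and Max-Weight $k$-Clique. Since an unconditional cubic lower bound is out of reach, the goal is to match the $O(n^3)$ upper bound under these two widely-believed hardness hypotheses, covering both the large-alphabet and the constant-alphabet regimes.

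The core algebraic observation driving the APSP reduction is that the TED recursion
\[
D(F,G) = \min\bigl\{D(F{-}v,G)+\cdel(v),\; D(F,G{-}w)+\cdel(w),\; D(F(v),G(w))+D(F{-}F(v),G{-}G(w))+\cmatch(v,w)\bigr\}
\]
lives naturally in the $(\min,+)$ semiring, which is exactly the algebraic structure of APSP. The first step is therefore to reduce from Negative Triangle, a problem subcubic-equivalent to APSP, to TED with alphabet size $\Omega(n)$. Given an $n$-vertex edge-weighted graph, the idea is to build two trees made of three "strands", one per side of a potential triangle, with a distinct label for each vertex acting as a coordinate. The costs $\cdel$ and $\cmatch$ must be chosen so that (i) every reasonable alignment picks a single triple $(u,v,w)\in V^3$, one endpoint from each strand, and (ii) the induced cost equals a fixed offset plus $w(u,v)+w(v,w)+w(w,u)$. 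Thresholding the final TED value then solves Negative Triangle, so any truly subcubic TED algorithm yields a truly subcubic APSP algorithm.

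For the constant-alphabet regime $|\Sigma|=O(1)$, vertex identities can no longer be transmitted through labels, so I would instead reduce from Max-Weight $k$-Clique. The design replaces per-vertex labels with \emph{positional} gadgets: each vertex of the host graph is encoded as a small subtree over a fixed alphabet, and vertex identity is conveyed by the position of the gadget along its strand. A tree built from $k$ strands then forces an alignment to implicitly pick one vertex per strand, i.e.\ a candidate $k$-tuple, and the TED cost is set up to accumulate the weights of all $\binom{k}{2}$ pairwise edges. Sizing the strands so that each tree has $\Theta(n^{k/3})$ nodes, an $O(N^{3-\epsilon})$ TED algorithm on $N$-node trees would translate into an $O(n^{k-\epsilon'})$ algorithm for Max-Weight $k$-Clique.

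The main obstacle in both reductions is forcing the TED dynamic program to respect the intended combinatorial structure of the encoding, so that it cannot "cheat" by aligning strands out of order, splitting a gadget, or skipping one with a clever deletion pattern. The standard way to enforce this is to pick $\cdel$ and $\cmatch$ on two well-separated scales: a very large \emph{structural} weight that is the same for every valid alignment, and a small \emph{semantic} weight that actually encodes the graph. Any deviation from the intended alignment costs at least one extra unit on the large scale and is thus strictly worse than the honest alignment. Verifying this separation under all the edge cases of the TED recursion, together with the constant-alphabet implementation of positional gadgets and a clean analysis of how strand sizes scale with $k$, will be the technical heart of the argument.
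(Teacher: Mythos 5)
Your proposal correctly identifies the two reductions the paper actually performs (Negative Triangle $\leq$ TED for $|\Sigma| = \Omega(n)$, Max-Weight $k$-Clique $\leq$ TED for $|\Sigma| = O(1)$), and the ``two scales of weights'' heuristic you invoke (a dominant structural term and a subdominant semantic term) is indeed how both of the paper's cost assignments are organized. But the concrete tree topology you propose is not the one the paper uses, and I do not think it can be made to work.

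You suggest building each tree from ``three strands, one per side of a potential triangle.'' A three-strand tree is a root with three path children; the TED dynamic program on two such trees has enormous alignment freedom (each strand-to-strand alignment is essentially an unconstrained string alignment, which does not commit to a single vertex per strand), and nothing in your sketch explains how the recursion is forced to ``pick a single triple $(u,v,w)$.'' The paper instead uses two \emph{caterpillar} trees: a single spine of length $O(n)$ with one leaf hanging off each spine node, leaves on the right of $F$ and on the left of $G$. The entire reduction rests on a structural lemma (Lemma~\ref{lem:redstructure}) showing that any valid alignment of two such caterpillars has a very rigid shape: a top prefix of spine-to-spine matches, a bottom block of leaf-to-leaf matches that are necessarily in \emph{reverse} order, and at most one spine-to-leaf match in each direction sitting in between. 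It is exactly this rigidity that lets the cost function encode a triangle: the single leaf-to-leaf ``pivot'' at depth $k$, the two spine-to-leaf matches contribute $w(i,k)$ and $w(j,k)$, and the telescoping sum along the prefix of spine matches contributes $w(i,j)$. Without this lemma (or an equivalent structural control), the thresholding step (ii) of your plan has no basis. This structural lemma, not the cost-scale separation, is the missing key idea.

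The constant-alphabet half has a second missing idea. You propose ``$k$ strands, one vertex per strand, conveying identity by position.'' The paper does something different and cleverer: it reuses the same caterpillar/triangle machinery, but each spine position now indexes a $k/3$-clique (so the spine has $N \leq n^{k/3}$ nodes), and the problem becomes a triangle over $k/3$-cliques. The small subtrees $T(U)$ attached to the spine (connection, equality, and decrease gadgets, plus the $I$-gadget that replaces depth-dependent costs) are what let a constant alphabet encode base-$n$ weights and control which subtrees can match which. Your ``$k$ strands, one vertex per strand'' picture yields trees of size $O(kn)$, not $\Theta(n^{k/3})$, so it cannot deliver the $n^{k(1-\eps)}$ lower bound you want — the parameter counting in your last sentence silently depends on the $k/3$-clique enumeration trick that the sketch omits. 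To repair the proposal you would need to (a) replace the strand topology with the caterpillar and prove the analogue of Lemma~\ref{lem:redstructure}, (b) exhibit an explicit cost assignment and verify the telescoping calculation both for the intended alignment and for all deviations, and (c) for constant alphabet, reduce via $k/3$-cliques and design the gadgets that simulate depth-dependent costs.
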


In the same paper proving the $O(n^3)$ upper bound for TED~\cite{DMRW}, Demaine et al. prove that their algorithm is optimal within a certain class of dynamic programming algorithms for TED. However, proving Conjecture~\ref{conj:TED} seems to be beyond our current lower bound techniques.

A recent development in theoretical computer science suggests a more fine-grained classification of problems in P. This is done by showing lower bounds conditioned on the conjectured hardness of certain archetypal problems such as  All Pairs Shortest Paths (APSP), 3-SUM, $k$-Clique, and Satisfiability, i.e., the Strong Exponential Time Hypothesis (SETH).

\paragraph{The APSP Conjecture.}
Given a directed or undirected graph with $n$ vertices and integer edge weights, many classical  algorithms for APSP (such as Dijkstra or Floyd-Warshall) run in $O(n^3)$ time.
The fastest  to date is the recent algorithm of  Williams~\cite{RyanWilliams} that runs faster than $O(n^3/\log^C n)$ time for all constants $C$.
Nevertheless, no truly subcubic $O(n^{3-\varepsilon})$ time algorithm for APSP is known.
This led to the following conjecture assumed in many papers, e.g.~\cite{BackursTzamos,BackursDikkalaTzamos,AbboudVassilevskaFOCS14,AbboudLewi2013,CountingWeightedSubgraphs,AbboudPlanar,RodittyZwick,AbboudGrandoniVassilevska,VW10,AmirVassilevskaYu}. 

\begin{conjecture}[APSP]\label{conj:APSP}
For any $\varepsilon > 0$ there exists $c > 0$, such that All Pairs Shortest Paths  on
$n$ node graphs with edge weights in $\{1,\ldots, n^c\}$ cannot be solved in $O(n^{3-\varepsilon})$ time.
\end{conjecture}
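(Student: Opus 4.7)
The statement in question is in fact a conjecture rather than a theorem: no proof is known, and establishing it would resolve one of the central open problems of fine-grained complexity. Nevertheless, one can outline two natural routes an attempted proof could take: (i) an unconditional cubic lower bound in some sufficiently general computational model, or (ii) a reduction to APSP from a problem whose cubic hardness is independently established. I would begin with route (i) in the algebraic $(\min,+)$ setting, since APSP is equivalent under subcubic reductions to the distance product of two $n \times n$ integer matrices (Fischer--Meyer). In restricted $(\min,+)$ models without branching or table lookups, Kerr's 1970 rank argument already yields $\Omega(n^3)$, so the concrete goal would be to extend that bound to a model that also captures bit-level tricks and $O(\log n)$-bit word-RAM table lookups, thereby ruling out algorithms in the spirit of Williams' $O(n^3 / 2^{\Omega(\sqrt{\log n})})$ procedure.

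Should the algebraic route stall, I would fall back on route (ii). The candidate source problems are sparse: Boolean matrix multiplication succumbs to fast matrix multiplication, and communication-complexity arguments only deliver $\Omega(n^2)$. A more promising target would be to invert the direction of the known reduction from APSP to Negative/Max-Weight Triangle (Vassilevska Williams and Williams) by pulling a cubic lower bound out of triangle detection — but the natural lower bounds there again saturate at $n^2$, and no unconditional inversion is known without assuming a separate hardness hypothesis. Alternative angles would be to read off a cubic bound from the structure of the shortest-path polytope or from tropical matrix rank; to be useful, such an argument would have to rule out non-algebraic algorithms, which is exactly the obstacle below.

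The hard part, and the reason the conjecture is merely conjectured, is that current lower-bound techniques cannot separate polynomial subclasses of $\mathrm{P}$. Any computational model flexible enough to encompass Williams' sub-polylogarithmic word-RAM improvements is too powerful for any extant counting, rank, or communication argument, and any model restricted enough for such arguments fails to capture known algorithms. For precisely this reason, the paper adopts Conjecture~\ref{conj:APSP} as a working hypothesis rather than attempting to prove it, and uses it as the source of cubic hardness for tree edit distance; an actual proof of the conjecture itself remains one of the main open problems of the area.
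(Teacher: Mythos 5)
You correctly recognize that the statement is a conjecture which the paper assumes as a hardness hypothesis rather than proves; the paper indeed offers no proof, only citations to prior work that relies on the same assumption. Your discussion of why an actual proof is out of reach is accurate and consistent with the paper's treatment, so there is nothing to compare beyond noting that both you and the authors leave the conjecture unproven.
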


\paragraph{The (Weighted) \boldmath$k$-Clique Conjecture.}  
The fundamental $k$-Clique problem asks whether a given undirected unweighted graph on $n$ nodes and $O(n^2)$ edges contains a clique on $k$ nodes. 
This is the parameterized version of the famously NP-hard Max-Clique~\cite{Karp72}.
$k$-Clique is amongst the most well-studied problems in theoretical computer science, and it is the canonical intractable (W[1]-complete) problem in parameterized complexity.
A naive algorithm solves $k$-Clique in $O(n^k)$ time. A faster $O(n^{\omega k/3})$-time algorithm (where $\omega < 2.373$ is the exponent of matrix multiplication) can be achieved via a reduction to Boolean matrix multiplication on matrices of size $n^{k/3} \times n^{k/3}$ if $k$ is divisible by 3~\cite{NP85}\footnote{For the case that $k$ is not divisible by 3 see~\cite{EG04}.}.
Any improvement to this bound  immediately implies a faster algorithm for MAX-CUT~\cite{williams2005new,woeginger2008open}. It is a longstanding open question whether improvements to this bound are possible~\cite{woeginger,babai}.
The \emph{$k$-Clique conjecture} asserts that for no $k \ge 3$ and $\eps > 0$ the problem has an $O(n^{\omega k/3 - \eps})$ time algorithm, or an $O(n^{k-\eps})$ algorithm avoiding fast matrix multiplication, and has been used e.g. in~\cite{Clique,BringmannGL16}.

We work with a conjecture on a weighted version of $k$-Clique. 
In the {\em Max-Weight k-Clique} problem, the edges have integral weights and we seek the $k$-clique of maximum total weight. When the edge weights are small, one can obtain an $\tilde O(n^{k-\eps})$ time algorithm~\cite{Alon:1997,NP85}. However, when the weights are larger than $n^k$, the trivial $O(n^k)$ algorithm is the best known (ignoring $n^{o(k)}$ improvements).
This gives rise to the following conjecture, which has been used e.g. in \cite{AbboudVassilevskaWeimann,BackursTzamos,BackursDikkalaTzamos}.

\begin{conjecture}[Max-Weight $k$-Clique] \label{conj:maxclique}
For any $\varepsilon>0$ there exists a constant $c>0$, such that for any $k\geq 3$ 
Max-Weight $k$-Clique
on $n$-node graphs with edge weights in $\{1,\ldots,n^{ck}\}$ cannot be solved in
$O(n^{k(1-\varepsilon)})$ time.
\end{conjecture}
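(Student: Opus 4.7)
The plan is to motivate Conjecture~\ref{conj:maxclique} rather than prove it: an unconditional super-linear lower bound against all word-RAM algorithms for a problem in P is beyond current techniques, so what one can realistically sketch is (i) the algorithmic evidence that no faster algorithm is known in the stated weight regime, and (ii) how one might strengthen the conjecture by reducing a presumably-harder problem to Max-Weight $k$-Clique.

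For part (i) I would review the upper-bound landscape. When edge weights lie in $\{1,\dots,W\}$, combining the Alon--Galil--Margalit trick with fast (rectangular) matrix multiplication gives roughly $\tilde O(W \cdot n^{\omega k/3})$ time, but these savings break down once $W$ exceeds $n^k$. In the unweighted case, an $O(n^{k-\eps})$-time algorithm avoiding fast matrix multiplication is a long-standing open problem, and any such improvement would yield a faster algorithm for MAX-CUT~\cite{williams2005new}, as the excerpt already notes. Taking $c$ large enough pushes the weighted problem into a regime where every known technique collapses to brute force. The quantifier order ``for every $\eps>0$ there exists $c>0$'' is also the right one: small weights genuinely admit faster algorithms, so any reduction invoking the conjecture must be allowed to produce instances whose weight magnitudes may depend on the speed-up $\eps$ being ruled out, exactly paralleling the statement of Conjecture~\ref{conj:APSP}.

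For part (ii) I would look for a fine-grained reduction from APSP or from Min-Weight $k$-Clique to Max-Weight $k$-Clique, which would strengthen the evidence by linking this conjecture to another well-studied hardness hypothesis. The main obstacle is that in the fine-grained literature, reductions tend to flow the other way (from $k$-Clique variants \emph{into} APSP), so a reduction into Max-Weight $k$-Clique at the required weight magnitude would already be a nontrivial contribution in its own right. Since no such reduction is presently known and no unconditional lower bound is within reach, the paper adopts Conjecture~\ref{conj:maxclique} as an axiom and derives a conditional lower bound for tree edit distance on top of it, rather than attempting to establish it directly.
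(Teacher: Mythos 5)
You correctly recognize that this is a conjecture and not a theorem, so the paper offers no proof but only the same algorithmic motivation you give in part (i): small weights admit $\tilde O(n^{k-\eps})$ algorithms via~\cite{Alon:1997,NP85}, large weights seem to require the trivial $O(n^k)$, hence the quantifier order ``$\forall\eps\,\exists c$'', plus citations to prior papers that adopt the same hypothesis. Your part (ii) on possible reductions into Max-Weight $k$-Clique is reasonable extra commentary but not something the paper attempts.
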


In 2014, with the burst of the conditional lower bound paradigm, Abboud~\cite{YRICALP} highlighted seven main open problems in the field: The first two were to prove quadratic $n^{2-o(1)}$ lower bounds for  {\em String Edit Distance} and {\em Longest Common Subsequence}, which were soon resolved in STOC'15~\cite{BackursIndyk} and FOCS'15~\cite{Bringmann,AbboudLCS} conditional on SETH. The third problem was to show a cubic $n^{3-o(1)}$ lower bound for {\em RNA-Folding}. Surprisingly, in  FOCS'16~\cite{RNA} it was shown that RNA-Folding can actually be solved in truly subcubic time, thus ruling out the possibility of such a lower bound. 
The remaining four problems remain open. In fact, two of them, showing a cubic lower bound for {\em Graph Diameter}  and an  $n^{\lceil k/2 \rceil-o(1)}$ lower bound for {\em k-SUM}, have actually been used as hardness conjectures themselves, e.g., in SODA'15~\cite{AbboudGrandoniVassilevska} and ICALP'13~\cite{AbboudLewi2013}. Until the present work, no progress has been made on the last problem posed by Abboud: A cubic lower bound for {\em Tree Edit Distance}. In the absence of progress on either upper bounds or conditional lower bounds for TED, one might think that Conjecture~\ref{conj:TED} is yet another fragment in the current landscape of fine grained complexity, and is unrelated to other common conjectures.

\subsection{Our Results}

In this paper we resolve the complexity of tree edit distance by showing a tight connection between edit distance of trees and all pairs shortest paths of graphs. We prove that Conjecture~\ref{conj:APSP} implies Conjecture~\ref{conj:TED}, and that Conjecture~\ref{conj:maxclique} implies Conjecture~\ref{conj:TED}, even for constant alphabet.

\begin{theorem} \label{thm:main}
	A truly subcubic algorithm for tree edit distance on alphabet size $|\Sigma| = \Omega(n)$ implies a truly subcubic algorithm for APSP. 	A truly subcubic  algorithm for tree edit distance on sufficiently large alphabet size $|\Sigma| = O(1)$ implies an $O(n^{k(1-\varepsilon)})$ algorithm for Max-Weight $k$-Clique.
\end{theorem}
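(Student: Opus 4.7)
The plan is to prove both parts of Theorem~\ref{thm:main} by constructing pairs of labelled trees whose edit distance simulates either a $(\min,+)$--matrix product (for APSP) or a weighted clique evaluation (for Max-Weight $k$-Clique).

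\textbf{Part 1 (APSP, large alphabet).} I would start from the classical subcubic equivalence between APSP and $(\min,+)$--matrix multiplication, so that it suffices to reduce the $(\min,+)$--product of two $n \times n$ matrices $A,B$ with entries in $\{1,\dots,\poly(n)\}$ to a single TED instance. The plan is to encode $A$ into a tree $F$ of size $\tilde O(n)$ whose top-level structure hangs $n$ ``row gadgets'' $F_1,\dots,F_n$ off a common spine, and $B$ into a tree $G$ with $n$ ``column gadgets'' $G_1,\dots,G_n$ of the same shape. Each $F_i$ is itself a small subtree listing the entries of row $i$, with a private label per index $k\in[n]$; the alphabet size $|\Sigma|=\Theta(n)$ is exactly what makes these labels pairwise distinct. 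With $\cmatch$ and $\cdel$ tuned on the numerical labels, the optimal alignment of $F_i$ against $G_j$ is forced to pick a common index $k$ and pay $A_{ik}+B_{kj}$ plus a uniform offset. The outer spines are arranged so that a globally optimal alignment must align $F_i$ with $G_j$ for every $(i,j)$, making the edit distance equal to $\sum_{i,j}\min_k(A_{ik}+B_{kj})$ plus a known constant. The ``sum'' variant of $(\min,+)$--product is subcubic-equivalent to the full product by a standard isolation argument, closing the reduction.

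\textbf{Part 2 (Max-Weight $k$-Clique, constant alphabet).} With $|\Sigma|=O(1)$, the per-index private label trick breaks down, so each index $v\in[n]$ is encoded instead by a caterpillar of length $\Theta(\log n)$ whose binary label pattern acts as an ``index code,'' chosen so that two caterpillars for distinct indices cannot be mutually aligned at low cost. Given a Max-Weight $k$-Clique instance, I would then construct trees $F,G$ of size $\tilde O(n^{k/3})$ whose outer shape forces the optimum to commit to a $k$-tuple of vertex indices $v_1,\dots,v_k$ (selected via the caterpillar codes on both sides) and, in inner ``edge-weight'' gadgets, to pay a cost equal to $\sum_{1\le a<b\le k}w(v_a,v_b)$ plus a known offset; converting $\min$ into $\max$ uses the usual complement trick on the polynomially bounded weights. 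A hypothetical $O(N^{3-\eps})$-time TED algorithm on size-$N$ trees would then produce a $\tilde O(n^{k(1-\eps/3)})$-time algorithm for Max-Weight $k$-Clique, as required.

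\textbf{Main obstacle.} The chief difficulty in both reductions is the \emph{locality} of the TED recurrence: the cheapest alignment only couples subforests that are simultaneously active in the DP, so forcing the optimum to realise the intended semantic pairing of every row with every column (part~1), or the joint selection of $k$ indices (part~2), requires heavy replication of gadgets and careful synchronisation of their positions. The technical bulk of the proof is therefore a \emph{soundness} lemma verifying that no ``off-diagonal'' alignment, no partial deletion, and no shuffled re-use of a gadget can undercut the planned cost; this is typically established by equipping each gadget with a certificate of unique labels (or of hard-to-confuse label patterns in part~2) that any non-canonical alignment would have to destroy, and then lower-bounding that destruction via $\cdel$ and $\cmatch$. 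Keeping the overhead of this certification small, so that the trees stay of size $\tilde O(n)$ in part~1 and $\tilde O(n^{k/3})$ in part~2, is the most delicate piece of the construction.
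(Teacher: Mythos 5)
Your plan for Part~1 has a fundamental flaw: you want a single TED instance whose edit distance equals $\sum_{i,j}\min_k(A_{ik}+B_{kj})$, arguing that ``a globally optimal alignment must align $F_i$ with $G_j$ for every $(i,j)$.'' This is impossible. A tree edit distance alignment is a (partial) one-to-one mapping between the nodes of $F$ and $G$ preserving left-right and ancestor-descendant relations, so a row gadget $F_i$ can be matched against at most one column gadget $G_j$; you cannot realize all $n^2$ pairings in one alignment, hence cannot extract a sum over all $(i,j)$. The actual reduction in the paper sidesteps this by reducing instead from \emph{negative triangle detection} (subcubic-equivalent to APSP by Vassilevska Williams--Williams): only one triple $(i,j,k)$ need be ``selected'' by the optimum. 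The instance is a pair of caterpillar trees, and a structural lemma (Lemma~\ref{lem:redstructure}) pins down exactly which matchings are geometrically possible; the costs are then engineered with carefully scaled $M$-terms so that the optimum must match $b'_k \leftrightarrow d'_k$, $b_{k+1} \leftrightarrow c'_j$, $a'_i \leftrightarrow d_{k+1}$, and a telescoping chain of spine matches $a_\ell \leftrightarrow c_\ell$ contributing $w(i,j)$. Your proposal contains no analogue of this structural argument, which is the crux of the soundness proof.

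Your Part~2 size claim of $\tilde O(n^{k/3})$ is also unattainable. With a constant alphabet, every gadget that must (a) encode a weight in $\{1,\dots,n^{O(k)}\}$, or (b) behave differently when matched against gadgets for different vertices $u\in[n]$, requires $\Omega(n)$ nodes: the paper's ``decrease gadget'' represents a weight in base $n$ via $\Theta(n)$-length segments, and the ``equality gadget'' forcing $u=v$ is a path of length $n$, not $\Theta(\log n)$. Your proposed $\Theta(\log n)$ ``binary index codes'' do not control matching costs: with a constant alphabet and $O(\log n)$ nodes, you cannot make the cost of aligning $T(u)$ with $T(v)$ behave like a near-threshold function of whether $u=v$ while also interacting correctly with the surrounding spine. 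Moreover, since a spine node for a $k/3$-clique $U$ must carry, for every possible $u'\in[n]$, a component encoding $\sum_{v\in U}w(u',v)$, the gadget per spine node is $\Omega(n^2)$, giving trees of size $\Omega(n^{k/3+2})$, exactly what the paper constructs. The resulting run time is $O(n^{(k/3+2)(3-\eps')})$, which only violates the conjecture for $k$ sufficiently large relative to $\eps'$ --- a weaker but correct form of the claim. Finally, the paper additionally needs the $I$-gadget to enforce alignment of the correct spine position in a ``distributed'' way, a subtlety absent from your outline.
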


Note that the known upper bounds for {\em string} edit distance and {\em tree} edit distance are highly related. The $O(n^2)$ algorithm for strings and the $O(n^3)$ algorithm for trees (and forests) are both based on a similar recursive solution: The recursive subproblems in strings (forests) are obtained by either deleting, inserting, or matching the rightmost or leftmost character (root). In strings, it is best to always consider the rightmost character. The recursive subproblems are then prefixes and the overall running time is $O(n^2)$. In trees however, sticking with the rightmost (or leftmost) root may result in an $O(n^4)$ running time. The specific way in which the recursion switches between leftmost and rightmost roots is exactly what enables the $O(n^3)$ solution. It is interesting that while the upper bounds for both problems are so similar, the lower bounds string edit distance exhibits the hardness of the SETH while tree edit distance exhibits the hardness of APSP.    

While a considerable share of the recent conditional lower bounds is on string pattern matching problems~\cite{BringmannGL16,CIP09,larsen2015hardness,Clique,backurs2016regular,AbboudVassilevskaWeimann,BackursIndyk,polylogshaved,AbboudLCS,JumbledHardness,Bringmann}, the only conditional lower bound for a tree pattern matching problem is the recent SODA'16 quadratic lower bound for {\em exact pattern matching}~\cite{AmirIsomorphism} (the problem of deciding whether one tree is a subtree of another). We solve the main remaining open problem in tree pattern matching, and one of the last remaining classic dynamic programming problems. 
Indeed, apart from the problems discussed above, for most of the classic dynamic programming problems a conditional lower bound or an improved algorithm have been found recently. 
This includes the Fr\'echet distance~\cite{bringmann2014walking}, bitonic TSP~\cite{deberg_et_al}, context-free grammar parsing~\cite{Clique}, maximum weight rectangle~\cite{BackursDikkalaTzamos}, and pseudopolynomial time algorithms for subset sum~\cite{bringmann2017near}.
Tree edit distance was one of the few classic dynamic programming problems that so far resisted this approach. Two notable remaining dynamic programming problems without matching bounds are the optimal binary search tree problem ($O(n^2)$)~\cite{knuth1971optimum} and knapsack (pseudopolynomial $O(n W)$)~\cite{bellman1957dynamic}.

\subsection{Our Reductions}

\paragraph{APSP to TED.}
In order to prove APSP-hardness, by~\cite{VW10} it suffices to show a reduction from the negative triangle detection problem, where we are given an $n$-node graph $G$ with edge weights $w(.,.)$ and want to decide whether there are $i,j,k$ with $w(i,j) + w(j,k) + w(i,k) < 0$. 
Our first result is a reduction from negative triangle detection to tree edit distance, which produces trees of size $O(n)$ over an alphabet of size $O(n)$. This yields the matching conditional lower bound of $O(n^{3-\eps})$.

Our reduction constructs trees that are of a very special form: Both trees consist of a single path (called spine) of length $O(n)$ with a single leaf pending from every node (see Figure~\ref{fig:macrogeneral}).
Such instances already have been identified as difficult for a 
restricted class of algorithms based on a specific dynamic programming approach~\cite{DMRW}. In our setting we
cannot assume anything about the algorithm, and hence need a deeper insight on the structure of any valid sequence of edit operations (see Figure~\ref{fig:macrogeneral} and Lemma~\ref{lem:redstructure}).   
Using this structural understanding, we then show that it is possible to carefully construct a cost function so that any optimal solution must obey a certain structure (Figure~\ref{fig:macrooptimal}). Namely, for some $i,j,k$ we match the two leaves in depth $k$, we match the right spine node in depth $k$ to the left leaf in depth $i$ (which encodes $w(i,k)$), we match the left spine node in depth $k$ to the right leaf in depth $j$ (which encodes $w(j,k)$), and we match as many spine nodes above depth $i$ and $j$ as possible (which together encode $w(i,j)$ by a telescoping sum).

\paragraph{Constant alphabet size.}
The drawback of the above reduction is the large alphabet size $|\Sigma|$, as essentially each node needs its own alphabet symbol.
There are two major difficulties to improving this to constant alphabet size. 

First, the instances identified as hard by the above reduction (and by Demaine et al.~\cite{DMRW} for a restricted class of algorithms) are no longer hard for small alphabet! Indeed, in Section~\ref{sec:algo} we give an $O(n^{2}|\Sigma|^{2} \log n)$ algorithm for these instances, which is truly subcubic for constant alphabet size. This algorithm is the first to break the barrier by Demaine et al.\ for such trees, and we believe it is of independent interest. Regarding lower bounds, this algorithm shows that for a reduction with constant alphabet size our trees necessarily need to be more complicated, making it much harder to reason about the structure of a valid edit sequence. We will construct new hard instances by taking the previous ones and attaching small subtrees to all nodes.

The second difficulty is that, since the input size of TED 
is $\tilde O(n + |\Sigma|^2)$, a reduction from negative triangle detection to TED with constant alphabet size would need to considerably compress the $\Omega(n^2)$ input size of negative triangle detection. It is a well-known open problem whether such compressing reductions exist.
To circumvent this barrier, we assume the stronger Max-Weight $k$-Clique
Conjecture, where the input size $\tilde O(n^2)$ is very small compared to the running time $O(n^k)$. 

\paragraph{Max-Weight \boldmath$k$-Clique to TED.}
Given an instance of Max-Weight $k$-Clique on an $n$-node graph $G$ and weights bounded by $n^{O(k)}$ we construct a TED instance on trees of size $O(n^{k/3+2})$ over an alphabet of size $O(k)$. One can verify that an $O(n^{3-\eps})$ algorithm for TED now implies an $O(n^{k(1-\eps/6)})$ algorithm for Max-Weight $k$-Clique, for any sufficiently large $k=k(\eps)$.  

We roughly follow the reduction from negative triangle detection; now each spine node corresponds to a $k/3$-clique in $G$. 
To cope with the small alphabet, we simulate the previous matching costs with small gadgets. In particular,
to each spine node, corresponding to some $k/3$-clique $U$, we add a small subtree $T(U)$ of size $O(n^2)$ such that the edit distance between $T(U)$ and $T(U')$ encodes the total weight of edges between $U$ and $U'$. 
This raises two issues. First, we need to represent a
weight $w \in \{0,\ldots,n^{O(k)}\}$ by trees over an alphabet of size $O(k)$ (that is, constant).
This is solved by writing $w$ in base $n$ as $\sum_{i=0}^{O(k)} \alpha_i n^i$ and
constructing $\alpha_i$ nodes of type $i$, such that the cost of matching two type $i$ nodes is $n^{i}$. 
A second issue is that we need the small subtree $T(U)$
to interact with every other small subtree $T(U')$. So, in a sense,
$T(U)$ needs to ``prepare'' for any possible $U'$, and yet
its size needs to be small. We achieve this by creating in $T(U')$, for every node $u$ in $G$, a separate component responsible for counting the total weight of all edges between
$u$ and all nodes in $U'$. Then, in $T(U)$
we have a separate component for every node $u\in U$, and make sure that it is necessarily
matched to the appropriate component in $T(U')$.

The final and most intricate component of our reduction is to enforce that in any optimal solution
we have some control on which small subtrees can be matched to which.
A similar issue was present in the negative triangle reduction,  
when we require control over which spine nodes above depth $i$ are matched to which spine nodes above depth $j$. 
This is handled in the negative triangle reduction by assigning a different matching cost depending on the node's depth.
Now however, we cannot afford so many different
costs. We overcome this with yet another gadget, called an $I$-gadget, 
that achieves roughly the same goal, but in a more ``distributed'' manner.

Both of our reductions are highly non-trivial and introduce a number of new tricks that could be useful for other problems on trees.

\section{Reducing APSP to TED}
\label{sec:apsp}

We re-define the cost of matching two nodes to be the original cost minus the cost of deleting
both nodes. Then, the goal of TED amounts to choosing a subset of \emph{red nodes} in both trees,
so that the subtrees defined by the red nodes are isomorphic (i.e., their left-right and ancestor-descendant relation
is the same in both trees) and the total cost of matching the corresponding red nodes is minimized. See Figure~\ref{fig:macrogeneral}.
We work with this formulation from now on.

\begin{figure}[h!]
\begin{center}
\includegraphics[scale=0.4]{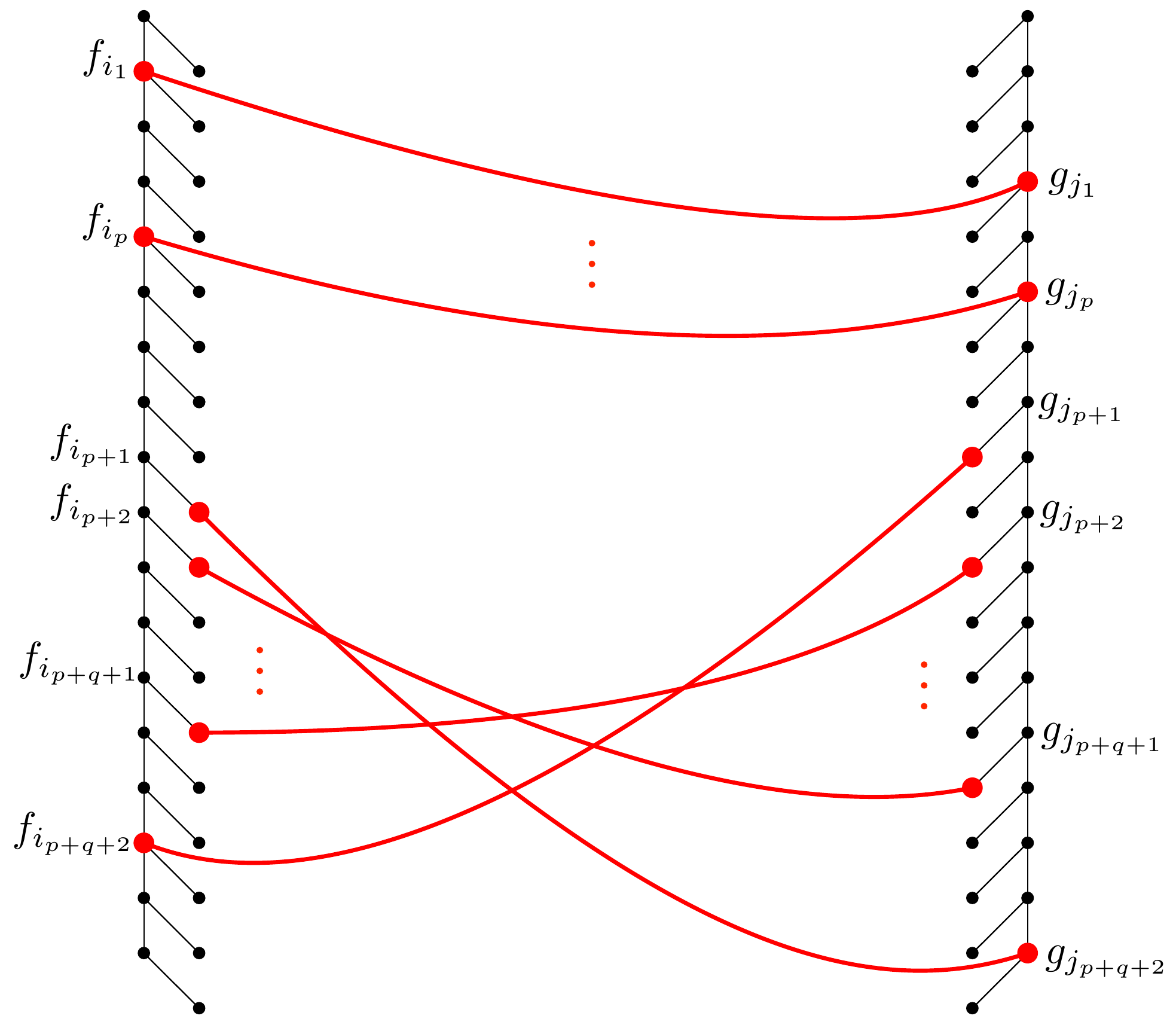}
\caption{Macro structure of the hard instance for TED: A tree $F$ composed of a single spine with leaves hanging to the right and a tree $G$ composed of a single spine with leaves hanging to the left.\label{fig:macrogeneral}}
\end{center}
\end{figure}

It turns out that a hard instance for TED is given by two seemingly simple caterpillar trees. These two trees $F$ and $G$, also called left and right, are shown in Figure~\ref{fig:macrogeneral}. 
Each tree consists of \emph{spine nodes} and \emph{leaf nodes}. If $u$ is a spine node then we denote by $u'$ the (unique)
leaf node attached to $u$. For any such hard instance of TED,
the red nodes in any matching have the structure given by Lemma~\ref{lem:redstructure} below. Informally, it states that starting from the top of the left tree and ordering the nodes by depth, the matching consists of (1) a prefix of a matching subsequence of spine nodes in both trees, (2) a suffix of a matching subsequence of leaf nodes that are in reverse order in the other tree, and (3) at most one final spine node in each of the trees matching a leaf node in the other tree that is located between the prefix part (1) and the suffix part (2).

\begin{lemma}
\label{lem:redstructure}
Let $f_{1},f_{2},\ldots$ and $g_{1},g_{2},\ldots$ denote the spine nodes of $F$ and $G$, respectively,
ordered by the depth. Then, for some $p,q\geq 0$ and some $i_{1}<i_2\cdots < i_{p}<i_{p+1}<\cdots <i_{p+q+1}<i_{p+q+2}$ and $j_{1}<j_2\cdots < j_{p}<j_{p+1}<\cdots <j_{p+q+1}<j_{p+q+2}$  the set of red nodes consists of:
\begin{enumerate}[leftmargin=0.6cm, label=(\arabic*)]
\item Spine nodes $f_{i_{1}},f_{i_{2}},\ldots,f_{i_{p}}$ matched respectively to spine nodes $g_{j_{1}},g_{j_{2}},\ldots,g_{j_{p}}$,\label{redstructure1}
\item Leaf nodes $f'_{i_{p+2}},f'_{i_{p+3}},\ldots,f'_{i_{p+q+1}}$ matched respectively to leaf nodes $g'_{j_{p+q+1}},g'_{j_{p+q}},\ldots,g'_{j_{p+2}}$ (note the reversed order),\label{redstructure2}
\item Optionally,  a spine node $f_{i_{p+q+2}}$ matched to leaf node $g'_{j_{p+1}}$. Also optionally, a spine node $g_{j_{p+q+2}}$ matched to a leaf node $f'_{i_{p+1}}$. \label{redstructure3}
\end{enumerate}

\end{lemma}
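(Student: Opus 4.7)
The plan is a careful case analysis on the four possible match types (spine-spine, spine-leaf, leaf-spine, leaf-leaf), exploiting the ordering of spine and leaf nodes in each of the two caterpillars. The red nodes induce an isomorphism between two subforests that must simultaneously preserve the ancestor-descendant relation and the left-to-right order among pairs that are not in ancestor-descendant position, and it is these two constraints that drive every sub-argument.

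First I would record the basic ordering facts. In $F$, $f_a$ is an ancestor of $f_b$ iff $a\le b$ and each leaf $f'_i$ is the right child of $f_i$; consequently, for $a>b$ the nodes $f_a$ and $f'_b$ are incomparable with $f_a<_F f'_b$ (since $f_a$ lies in the left subtree of $f_b$), while for two leaves with $a<b$ the deeper one satisfies $f'_b<_F f'_a$. Because $G$ is the mirrored configuration (leaves on the left), the analogous facts hold with the leaf order reversed: $g'_c<_G g'_d$ iff $c<d$.

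Armed with these facts I would derive pairwise constraints between match types. Two spine-spine matches must be in the same order on both sides (ancestors go to ancestors), giving the increasing sequences in part~(1). Two leaf-leaf matches are never ancestor-related and the opposite leaf orientations force them into opposite orders on the two sides, giving the reverse sequence in part~(2). A spine-spine match $(f_a,g_c)$ together with a leaf-leaf match $(f'_b,g'_d)$ must satisfy $a\le b$ and $c\le d$; otherwise one checks that $f_a<_F f'_b$ while $g'_d<_G g_c$, violating the left-right order. Thus, on each side, the spine-spine block lies ``above'' the leaf-leaf block. At most one spine-leaf match $f_a\leftrightarrow g'_c$ may exist, since two of them with $a_1<a_2$ would force the leaf $g'_{c_1}$ to be an ancestor of $g'_{c_2}$; symmetrically for the leaf-spine case. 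Moreover, if $f_a\leftrightarrow g'_c$ occurs, then no other red node in $F$ can have index $\ge a$, because its image would have to descend from the leaf $g'_c$; hence $a$ is the deepest $F$-side index, giving $i_{p+q+2}$, and symmetrically $j_{p+q+2}$ arises from the leaf-spine match. Finally, if both optional matches $f_a\leftrightarrow g'_c$ and $f'_b\leftrightarrow g_d$ are simultaneously present, a direct check of the four orderings forces $a>b$ and $c<d$, where the degenerate case $c=d$ is excluded because $g_c$ is the parent of $g'_c$ in $G$ while $f_a,f'_b$ are incomparable in $F$.

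Assembling these constraints gives the full indexing in the lemma: the spine-spine matches occupy $i_1<\cdots<i_p$ and $j_1<\cdots<j_p$, the leaf-leaf matches occupy $i_{p+2}<\cdots<i_{p+q+1}$ matched in reverse to $j_{p+2}<\cdots<j_{p+q+1}$, and the optional matches occupy the middle slot and the bottom slot on each side exactly as in part~(3). The main obstacle is the bookkeeping: the asymmetry between $F$ and $G$ together with the interaction of four match types under two distinct order relations produces a sizable number of sub-cases, and rigor requires verifying both ancestor-descendant and left-to-right preservation in each. No individual sub-case is difficult, but keeping track of which indices occupy which slots in part~(3), and ruling out degenerate coincidences such as $c=d$, takes care.
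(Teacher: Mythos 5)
Your argument is correct in substance, but it follows a genuinely different route from the paper's. The paper's proof first establishes a single structural fact about the red subtree: if a red node $u$ has two or more children $v_1,\ldots,v_k$ in the red subtree, then $u$ must be a spine node in both trees, each $v_{\ell+1}$ must be a leaf in $F$ and each $v_\ell$ a leaf in $G$ (so all $v_\ell$ are leaves of the red subtree, with $v_1$ the only possible spine node in $F$ and $v_k$ the only possible spine node in $G$). Hence the red subtree is a ``broom'' -- a path $u_1,\ldots,u_p$ with a fan $v_1,\ldots,v_k$ hanging off $u_p$ -- and parts (1)--(3) of the lemma are read off directly from the images of the $u_\ell$ and $v_\ell$. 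Your proof instead fixes the four match types and derives pairwise order constraints between any two matches, then assembles. Both approaches turn on exactly the same preorder facts about $F$ and $G$; the paper's single structural observation replaces your larger case analysis and final assembly, while your version makes each order check more explicit. If you flesh your sketch out, the one piece you assert but do not carry out is the placement of the leaf--spine match's $F$-index between $i_p$ and $i_{p+2}$ (and symmetrically on the $G$ side); this needs a comparison of the leaf--spine pair against both a spine--spine pair and a leaf--leaf pair, in addition to the comparisons you already list. Also, your sentence ``otherwise one checks that $f_a<_F f'_b$ while $g'_d<_G g_c$'' only covers the sub-case $a>b$ and $c>d$; the sub-cases $a>b,\,c\leq d$ and $a\leq b,\,c>d$ fail instead because ancestor--descendant is not preserved, which is worth saying explicitly.
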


\begin{proof}
Consider the subtree defined by the red nodes. It has two isomorphic copies, one in $F$ and one in $G$. Its nodes are all the red nodes. The children of node $u$ are all red nodes $v_{1},v_{2},\ldots,v_{k}$ whose lowest red ancestor is $u$. The order is such that $v_i$ precedes $v_{i+1}$ in a left-to-right preorder traversal of $F$ (or equivalently of $G$). Let $u$ be a red node with two or more children $v_{1},v_{2},\ldots,v_{k}$, $k\geq 2$. Observe that $u$ must correspond to spine nodes in both $F$ and
$G$. Further observe that at most one $v_i$ can correspond to a spine node (otherwise, for two spine nodes one must be an ancestor of the other). 
Consider any $\ell\in\{1,2,\ldots,k-1\}$. It is not hard to see that node $v_{\ell+1}$ must correspond to a leaf node
in $F$ and node $v_{\ell}$ must correspond to a leaf node in $G$. This implies that both
$v_{\ell}$ and $v_{\ell+1}$ are leaves in the red subtree. Moreover, $v_1$ is the only node that may correspond to a spine node in $F$ and $v_k$ is the only node that may correspond to a spine node in $G$.
Consequently, the red subtree has
a particularly simple structure: it consists of nodes $u_{1},u_{2},\ldots,u_{p}$ such that for every $\ell=1,2,\ldots,p-1$ the
only child of $u_{\ell}$ is $u_{\ell+1}$, and nodes $v_{1},v_{2},\ldots,v_{k}$ (for some $k\geq 1$) that are all children
of $u_{p}$.

For every $\ell=1,2,\ldots,p$, the node $u_\ell$ must correspond to a spine node $f_{i_{\ell}}\in F$ and
$g_{j_{\ell}}\in G$. We immediately obtain~\ref{redstructure1} that $i_{1}<i_{2}<\ldots<i_{p}$ and that $j_{1}<j_{2}<\ldots<j_{p}$.
The nodes $v_{1},v_{2},\ldots,v_{k}$ are all children of $u_{p}$ in the subtree. It is possible that all $v_{i}$ are mapped to leaf nodes $f'_{i_{p+2}},f'_{i_{p+3}},\ldots,f'_{i_{p+q+1}}$ and  $g'_{j_{p+2}},g'_{j_{p+3}},\ldots,g'_{j_{p+q+1}}$. In this case, they must be mapped in reverse order since a left-to-right preorder traversal visits the leaves of $G$ in order of their depth and in reverse-depth order in $F$. This implies~\ref{redstructure2} that $i_{p}\leq i_{p+2}< \ldots < i_{p+q+1}$ and $j_{p} \le j_{p+2} < \ldots < j_{p+q+1}$. Recall however that $v_{1}$ may be mapped to a spine node $f_{i_{p+q+2}}$ in $F$ and a leaf node $g'_{j_{p+1}}$ in $G$. The requirement that $i_{p+q+2}>i_{p+q+1}$ and that  $j_{p+2} > j_{p+1}\ge j_p$ follows from the fact that these nodes correspond to a leftmost leaf in the subtree.
For symmetric reasons, $v_{k}$ may be matched to a spine
node $g_{j_{p+q+2}}\in G$ for some $j_{p+q+2}>j_{p+q+1}$ and $i_{p+2} > i_{p+1}\ge i_p$. This implies~\ref{redstructure3} and concludes the proof.
\end{proof}

The above lemma characterizes the structure of a solution to what we call the {\em hard instance} of TED. 
We next show how to reduce 
the \emph{negative triangle detection} problem to TED on the hard instance. Negative triangle detection is known to be subcubic equivalent to APSP~\cite{VW10}. Given a complete weighted
$n$-node undirected graph, where $w(i,j)$ denotes the weight of the edge $(i,j)$, the problem asks whether there are $i,j,k$ such that
$w(i,j)+w(j,k)+w(i,k) < 0$. To solve negative triangle detection, we clearly only need to find $i,j,k$ that minimize
$w(i,j)+w(j,k)+w(i,k)$. 
We will show how to construct, given such a graph, a hard instance of TED of size $O(n)$,
such that $\min_{i,j,k}w(i,j)+w(j,k)+w(i,k)$ can be extracted from the edit distance. 

\begin{lemma}
Given a complete undirected $n$-node graph $G$ with weights $w(.,.)$
in $\{1,\ldots,n^{c}\}$, we construct, in linear time in the output size, an instance of TED of size $O(n)$ with alphabet size $|\Sigma|=O(n)$
such that the minimum weight of a triangle in $G$ can be extracted from the edit distance. 
\end{lemma}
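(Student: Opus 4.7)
The plan is to reduce negative triangle detection on the given $n$-vertex weighted graph to a single TED instance on the hard caterpillars of Figure~\ref{fig:macrogeneral}, with trees of size $O(n)$ and alphabet of size $O(n)$, arranged so that the edit distance equals $C+\min_{i,j,k}\bigl(w(i,j)+w(j,k)+w(i,k)\bigr)$ for a universal additive constant $C$.

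First, I would fix the shape: both trees are caterpillars with spines of $\Theta(n)$ nodes whose depths are in bijection with the vertices of the graph, augmented by a constant number of top and bottom sentinel nodes so that Lemma~\ref{lem:redstructure} can always realize part~\ref{redstructure3} for every triple $(i,j,k)$. I would assign a distinct alphabet symbol to every node (giving $|\Sigma|=O(n)$) and use a single uniform $\cdel$, so that thanks to the reformulation at the start of Section~\ref{sec:apsp} only the values of $\cmatch$ on label pairs need to be engineered.

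The core of the proof is to choose $\cmatch$ so that Lemma~\ref{lem:redstructure} forces every optimum into a canonical form parameterized by $(i,j,k)$ of cost exactly $C+w(i,j)+w(j,k)+w(i,k)$. I would engineer $\cmatch$ in three layers. Layer~1 uses large penalties to rule out mis-aligned spine-to-spine matches, to force both optional spine-to-leaf matches of part~\ref{redstructure3} of Lemma~\ref{lem:redstructure} to be present, and to make the aligned spine prefix extend as deep as possible. Layer~2 sets the cost of matching the switching spine in $F$ to the depth-$j$ leaf in $G$ to $w(j,k)+\alpha(j,k)$, and symmetrically for the other switching match; the corrections $\alpha$ are chosen so that their combined contribution will cancel against layer~3. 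Layer~3 assigns to each aligned spine-to-spine match at depth $\ell$ a value $\beta(\ell)$ whose sum over the forced maximal aligned prefix telescopes into $w(i,j)$ minus the $\alpha$-corrections plus a universal offset, so that the grand total is exactly $C+w(i,j)+w(j,k)+w(i,k)$. Combining all three, the minimum-weight triangle is read off from the edit distance.

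The principal obstacle is the telescoping in layer~3: the aligned prefix lives on a single depth coordinate $\ell$, yet must carry information about the bivariate quantity $w(i,j)$. The resolution I plan to use is to offload the row and column marginals of $w(i,j)$ into the corrections $\alpha$ of the two layer-2 spine-to-leaf matches, so that what the layer-3 prefix sum must contribute is only a univariate expression in $\min(i,j)$ (or a small symmetric function of $i$ and $j$), which is easily realized by a telescoping choice of $\beta$. The remaining verifications are a routine enumeration of the cases in Lemma~\ref{lem:redstructure} to show that any deviation from the intended canonical form strictly increases the cost; the construction itself is manifestly linear in the output size.
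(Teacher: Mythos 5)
There is a genuine gap in Layer~3, and it stems from a false premise. You write that ``the aligned prefix lives on a single depth coordinate $\ell$, yet must carry information about the bivariate quantity $w(i,j)$,'' and then you propose assigning each aligned spine-to-spine match a \emph{univariate} cost $\beta(\ell)$ while offloading the rest into the $\alpha$ corrections. But the aligned prefix does \emph{not} live on a single depth coordinate: each aligned pair matches a spine node $a_\ell$ in $F$ with a spine node $c_m$ in $G$ where $\ell$ and $m$ generally differ, and since every node has its own label, the cost $\cmatch(a_\ell,c_m)$ is free to depend on \emph{both} $\ell$ and $m$. The paper exploits exactly this freedom, setting $\cmatch(a_i,c_j)=-2M+w(i,j)-w(i-1,j-1)$; the $-2M$ term forces the optimal prefix to be a consecutive anti-diagonal run $a_i\!\to\!c_j,\;a_{i-1}\!\to\!c_{j-1},\ldots$ reaching the top of one tree, and the $w$-terms then telescope to $w(i,j)$ with the residual boundary term $w(i-j+1,1)$ absorbed by a special cost for $\cmatch(a_{\cdot},c_1)$ (and symmetrically $\cmatch(a_1,c_{\cdot})$). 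No $\alpha$-corrections are needed at all: the layer-2 costs carry $w(j,k)$ and $w(i,k)$ essentially unmodified.

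Your proposed workaround would in fact fail: with univariate $\beta$, the layer-3 contribution to a canonical solution parameterized by $(i,j,k)$ is a window sum $S(i)-S(i-j)$ (with $i\ge j$), and the layer-2 contributions are of the form $\alpha(j,k)+\alpha'(i,k)$. Requiring $S(i)-S(i-j)+\alpha(j,k)+\alpha'(i,k)=w(i,j)+w(j,k)+w(i,k)$ for all $(i,j,k)$ forces (after differencing in $i$ for fixed $j$) the finite differences $w(i+1,j)-w(i,j)$ to decompose as a function of $i$ plus a function of $i-j$, a rigidity that generic weight matrices violate. So $w(i,j)$ cannot be split into a univariate telescoping piece plus terms depending only on $(i,k)$ and $(j,k)$. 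Once you allow $\beta$ to be bivariate (as the distinct-label alphabet readily permits), the telescoping is immediate and your ``principal obstacle'' disappears; the rest of your layered plan then matches the paper's construction.
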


Consequently,
an $O(n^{3-\epsilon})$ time algorithm for TED implies an $O(n^{3-\epsilon})$ algorithm for negative triangle detection, and thus an $O(n^{3-\epsilon/3})$ algorithm for APSP by a reduction in~\cite{VW10}.

\begin{figure}[h!]
\begin{center}
\includegraphics[scale=0.7]{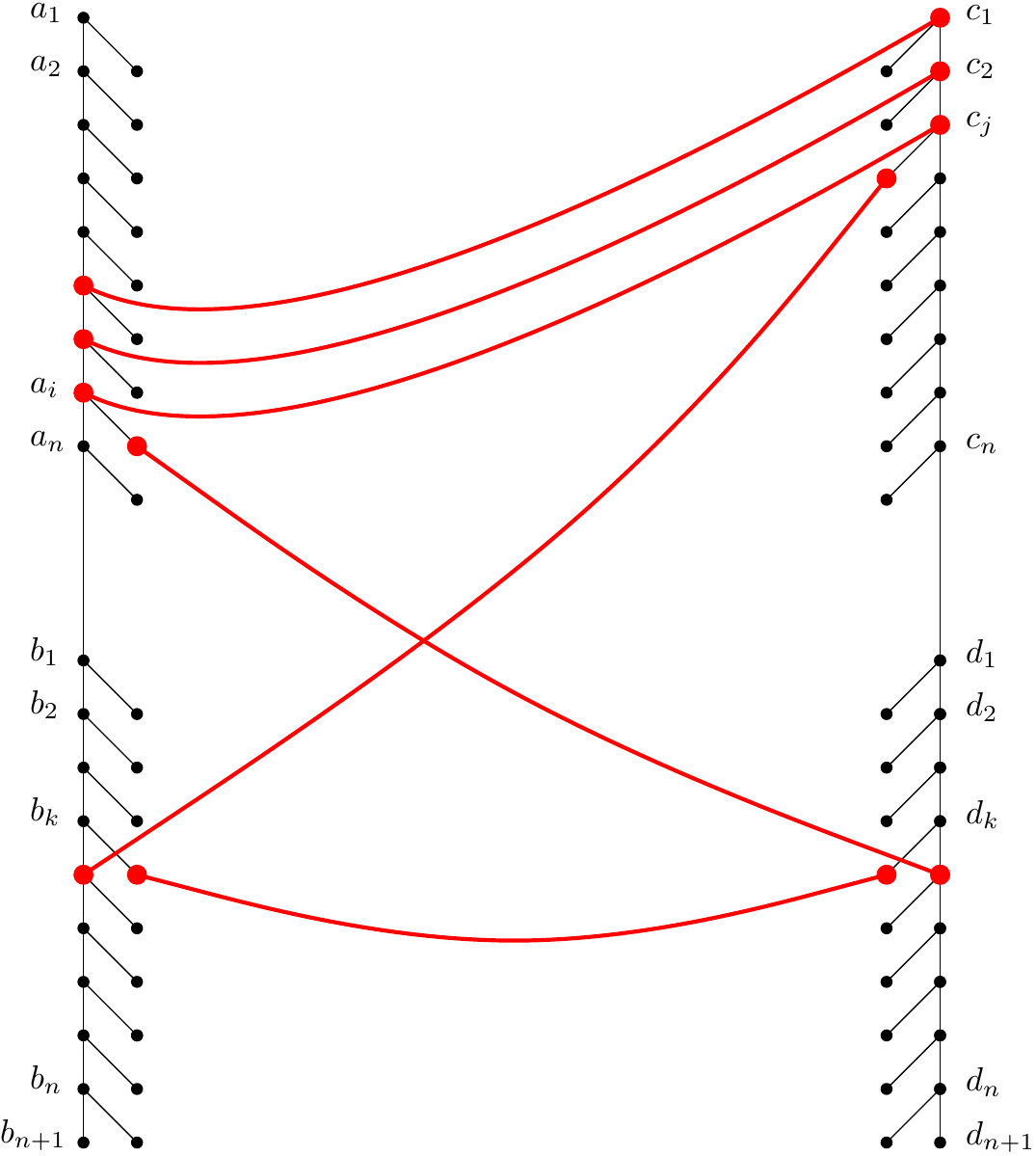}
\caption{A hard instance of TED constructed for a given instance of negative triangle detection. Appropriately
chosen costs ensure that any optimal solution has a specific structure. \label{fig:macrooptimal}}
\end{center}
\end{figure}

We create a hard instance of TED consisting of two trees $F$ and $G$ as in Figure~\ref{fig:macrooptimal}.
Every tree is divided into a \emph{top}
and a \emph{bottom} part. The spine nodes of these parts are denoted by $a_{1},a_{2},\ldots,a_{n}$ for
the top left part, $b_{1},b_{2},\ldots,b_{n+1}$ for the bottom left part, $c_{1},c_{2},\ldots,c_{n}$ for the top
right part, and $d_{1},d_{2},\ldots,d_{n+1}$ for the bottom right part.
The labels of all nodes are distinct (hence the alphabet size $|\Sigma|$ is $\Theta(n)$). 
We set the cost $\cmatch(u,v)$ of matching two nodes $u$ and $v$ as described
below, where $M$ denotes a sufficiently large number to be specified later. Intuitively, our assignment of costs ensures that any valid solution to TED must match
$b'_{k}$ to $d'_{k}$, $b_{k+1}$ to $c'_j$, and $d_{k+1}$ to  $a'_{i}$ for some $i,j,k$ (as shown in Figure~\ref{fig:macrooptimal}). Furthermore, the optimal solution (i.e., of minimum cost) must choose $i,j,k$ that minimize $w(i,k)+w(k,j)+w(i,j)$. The costs are assigned as follows:

\begin{enumerate}[label=(\arabic*)]
\item $\cmatch(b'_{k},d'_{k})=-M^{2}-2M\cdot k$ for every $k=1,2,\ldots, n$. \label{match:1}
\item $\cmatch(b_{k+1},c'_{j})=-M^{2}+M\cdot k+M\cdot j+w(k,j)$ for every $k=1,2,\ldots,n$ and $j=1,2,\ldots ,n$. \label{match:2}
\item $\cmatch(a'_{i},d_{k+1})=-M^{2}+M\cdot k+ M\cdot i+w(i,k)$ for every $i=1,2,\ldots,n$ and $k=1,2,\ldots ,n$. \label{match:3}
\item $\cmatch(a_{i},c_{j})=-2M+w(i,j)-w(i-1,j-1)$ for every $i=2,3,\ldots,n$ and $j=2,3,\ldots ,n$. \label{match:4}
\item $\cmatch(a_{i},c_{1})=-M(i+1)+w(i,1)$ for every $i=1,2,\ldots,n$. \label{match:5}
\item $\cmatch(a_{1},c_{j})=-M(j+1)+w(1,j)$ for every $j=1,2,\ldots,n$. \label{match:6}
\end{enumerate}

All the remaining costs $\cmatch(u,v)$ are set to $\infty$. The following theorem proves that these costs imply the required structure on the optimal solution as described above. 
Intuitively, by choosing  sufficiently large $M$, because of the $-M^{2}$ addend in \ref{match:1}, \ref{match:2} and \ref{match:3} we can ensure that any optimal solution matches
$b'_{k}$ to $d'_{k}$, $b_{k'+1}$ to $c'_j$, and $d_{k''+1}$ to $a'_{i}$, for some $i,j,k$ and
$k',k'' \leq k$.
Then, because of the $M\cdot k$ in \ref{match:2} and \ref{match:3}, in any optimal solution
actually $k=k'=k''$ and the total cost of all these matchings is $w(k,j)+w(i,k)$. Finally, because of the $-2M$ in \ref{match:4}, the $-M(i+1)$ in
\ref{match:5}, and the $-M(j+1)$ in \ref{match:6}, in any optimal solution $a_{i}$ is matched to $c_{j}$,
$a_{i-1}$ to $c_{j-1}$, $a_{i-2}$ to $c_{j-2}$ and so on. The total cost of these matching is $w(i,j)$ since the $w(i,j)-w(i-1,j-1)$ terms in \ref{match:4} form a telescoping sum.

\begin{theorem}
\label{thm:triangle}
For sufficiently large $M$, the total cost of an optimal matching in a hard instance with costs
\ref{match:1}-\ref{match:6} is $-3M^{2}+\min_{i,j,k}w(i,k)+w(k,j)+w(i,j)$.
\end{theorem}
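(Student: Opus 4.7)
The plan is to combine Lemma~\ref{lem:redstructure} with a careful accounting of $M$-scaled cost contributions. Since every undefined $\cmatch(u,v)$ equals $\infty$, a finite-cost matching uses only pairs of the six types \ref{match:1}--\ref{match:6}, so by Lemma~\ref{lem:redstructure} the ``path'' part of the red subtree consists of top-spine matches $(a_{i_\ell},c_{j_\ell})$, the ``leaf children'' are $(b'_k,d'_k)$ pairs, and the two optional leftmost/rightmost-child matches must be of the form $(b_{k^{(2)}+1},c'_{j})$ and $(a'_{i},d_{k^{(3)}+1})$. A crucial first observation is that, although Lemma~\ref{lem:redstructure}(2) permits many leaf-to-leaf matches, it pairs the $F$-leaves and $G$-leaves in reversed depth order; since $\cmatch(b'_k,d'_{k'})$ is finite only when $k=k'$, at most one leaf-to-leaf match can occur in a finite-cost solution. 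Consequently, the $-M^{2}$ term appears at most three times.

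Next I would show that for $M$ sufficiently large, the optimum attains exactly three $-M^{2}$ terms. For any triple $(i,j,k)\in\{1,\ldots,n\}^3$, the matching that uses all three transitions together with a diagonal top-spine matching achieves cost $-3M^{2}+O(Mn+n^{c+1})$, while any matching missing even one of these three transitions costs at least $-2M^{2}+O(Mn+n^{c+1})$, which is strictly worse for sufficiently large $M$. Hence the optimum contains exactly one leaf match $(b'_{k^{(1)}},d'_{k^{(1)}})$, one match $(b_{k^{(2)}+1},c'_{j})$, and one match $(a'_{i},d_{k^{(3)}+1})$.

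Summing the three transition costs gives $-3M^{2}+M(-2k^{(1)}+k^{(2)}+k^{(3)}+i+j)+w(k^{(2)},j)+w(i,k^{(3)})$. A short structural argument within the red subtree shows $k^{(1)}\le k^{(2)}$ (otherwise $b_{k^{(2)}+1}$ would lie on the path from $u_p$ to $b'_{k^{(1)}}$ in $F$, making $b_{k^{(2)}+1}$ a red ancestor of $b'_{k^{(1)}}$ and contradicting that both are children of $u_p$ in the red subtree), and symmetrically $k^{(1)}\le k^{(3)}$. Therefore $-2k^{(1)}+k^{(2)}+k^{(3)}\ge |k^{(2)}-k^{(3)}|\ge 0$, with equality iff $k^{(1)}=k^{(2)}=k^{(3)}=k$. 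Since this coefficient is multiplied by $M$ while the $w$-terms are bounded by $n^{c}$, for $M$ large the optimum enforces $k^{(1)}=k^{(2)}=k^{(3)}=k$.

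Finally, for top-spine matches $(a_{i_{\ell}},c_{j_{\ell}})$ with $\ell=1,\ldots,p$, $i_{p}\le i$ and $j_{p}\le j$, a short case analysis on whether $(i_1,j_1)=(1,1)$, or $i_1=1$, or $j_1=1$, or neither, yields a total top-spine cost of the form $-M\cdot C + S$. When $p=\min(i,j)$ and the matches form the diagonal ending at $(i,j)$, direct calculation shows $C=i+j$ and $S$ telescopes to $w(i,j)$. Every other configuration (smaller $p$ or off-diagonal) has $C\le i+j-1$, which for large $M$ dominates the $O(n^{c})$ perturbation of $S$ and is therefore strictly suboptimal. Thus the top-spine part contributes exactly $-M(i+j)+w(i,j)$, cancelling the $+M(i+j)$ from the transitions and giving optimal cost $-3M^{2}+\min_{i,j,k}\bigl[w(i,j)+w(i,k)+w(k,j)\bigr]$. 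The main obstacle is the tight cost tuning: every coefficient in \ref{match:1}--\ref{match:6} is designed so that every deviation from the intended matching is penalized by at least one multiplicative $M$-unit, which requires one to track $M$-coefficients and $w$-sums simultaneously across all the cases above.
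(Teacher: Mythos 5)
Your proposal is correct and follows essentially the same approach as the paper's proof: the same explicit matching for the upper bound, and for the lower bound the same use of Lemma~\ref{lem:redstructure} together with $M$-scaled cost accounting to force all three $-M^{2}$ transitions, $k^{(1)}=k^{(2)}=k^{(3)}$, and a diagonal top-spine matching. The only cosmetic difference is that you phrase the top-spine analysis as a direct bound on the $M$-coefficient ($C\le i+j$, equality iff diagonal) whereas the paper uses a sequence of local perturbation/exchange arguments, but these amount to the same calculation.
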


\begin{proof}
Consider $i,j,k$ minimizing $w(i,k)+w(k,j)+w(i,j)$. 
We assume without loss of generality that $i\geq j$. It is easy to see that it is possible to choose the
following matching (see Figure~\ref{fig:macrooptimal}):
\begin{enumerate}
\item $b'_{k}$ to $d'_{k}$ with cost $-M^{2}-2M\cdot k$.
\item $b_{k+1}$ to $c'_{j}$ with cost $-M^{2}+M\cdot k+M\cdot j+w(k,j)$.
\item  $d_{k+1}$ to $a'_{i}$ with cost $-M^{2}+M\cdot k+M\cdot i+w(i,k)$.
\item $a_{i}$ to $c_{j}$, $a_{i-1}$ to $c_{j-1}$, $a_{i-2}$ to $c_{j-2}$, \ldots, $a_{i-j+2}$ to $c_{2}$
with costs $-2M+w(i,j)-w(i-1,j-1),-2M+w(i-1,j-1)-w(i-2,j-2),\ldots,-2M+w(i-j+2,2)-w(i-j+1,1)$. 
\item $a_{i-j+1}$ to $c_{1}$ with cost $-M \cdot (i-j+2)+w(i-j+1,1)$.
\end{enumerate}
Summing up and telescoping, the total cost is
$
-M^{2}-2M\cdot k -M^{2}+M\cdot k+M\cdot j+w(k,j)-M^{2}+M\cdot k+M\cdot i+w(i,k)-2M\cdot (j-1)+w(i,j)-M\cdot (i-j+2)
$
which is equal to $-3M^{2}+w(i,k)+w(k,j)+w(i,j)$.

For the other direction,
we need to prove that every solution has cost at least $-3M^{2}+\min_{i,j,k} w(i,k)+w(k,j)+w(i,j)$. We first observe that, by Lemma~\ref{lem:redstructure}, a solution can match
$b'_{k}$ to $d'_{k}$ at most once for some $k$. Similarly, it can match $b_{k'+1}$ to $c'_{j}$ at most once
for some $j$ and $k'$, and $d_{k''+1}$ to $a'_{i}$  at most once for some $i$ and $k''$.
Furthermore, for $M$ large enough, either the cost is larger than $-3M^{2}$ or all three such
pairs of nodes are matched for some $k,i,j,$ and $k',k'' \geq k$. Furthermore, if $k'>k$ and $M$ is large enough
then we can decrease $k'$ by one thus decreasing the total cost, and similarly if $k''>k$. 
It is enough to consider an optimal solution and hence we can assume that $k=k'=k''$. 

Again by Lemma~\ref{lem:redstructure},
the only possible additional matched pairs of nodes are a subsequence of spine nodes $a_{1},\ldots,a_{i}$
and $c_{1},\ldots,c_{j}$. We show that an optimal solution matches $a_i$ with $c_j$, $a_{i-1}$ with $c_{j-1}$, ..., $a_{i-j+1}$ with $c_1$.
To this end, suppose that $a_{x_{z}}$ is matched to $c_{y_{z}}$, for every $z=1,2,\ldots,L$, where $1\leq x_{1}<\ldots <x_{L}\leq i$
and $1\leq y_{1}<\ldots <y_{L}\leq j$.
For every $z$ this contributes, up to lower order terms less than $M$, 
$-2M$ if $x_{z},y_{z}>1$, or $-M(y_{z}+1)$ if $x_{z}=1$, or $-M(x_{z}+1)$ if $y_{z}=1$.
In an optimal solution we have $x_{1}=1$ or $y_{1}=1$, as otherwise we can match $a_{1}$ to $c_{1}$ to decrease
the total cost.
First, assume that $y_{1}=1$. Then, if $x_{z'}+1<x_{z'+1}$ for some $z'\in\{1,\ldots,L\}$ (where we define
$x_{L+1}=i+1$), we can increase all $x_{1},x_{2},\ldots,x_{z'}$ by 1 to decrease the total cost by $M$, up to
lower order terms. So $x_{L}=i, x_{L-1}=i-1,\ldots, x_{1} = i-L+1$. Now if $L<j$ then $x_{1}>1$ (recall that we assumed $i \ge j$) and also
$y_{z'}+1<y_{z'+1}$ for some $z'\in\{1,\ldots,L\}$ (again, we define $y_{L+1}=j+1$). This means
that we can increase all $y_{1},y_{2},\ldots,y_{z'}$ by 1 and then additionally match $a_{x_{1}-1}$ with
$c_{1}$ to decrease the total cost by $M$, up to lower order terms. 
Second, if $x_{1}=1$ a symmetric argument applies.
We obtain that indeed $L=\min(i,j)=j$, and
$a_{i-j+1}$ is matched to $c_{1}$, $a_{i-j+2}$ is matched to $c_{2}$, ..., $a_i$ is matched to $c_j$.
Now, by the same calculations as in the previous
paragraph, the total cost is $-3M^{2}+w(i,k)+w(k,j)+w(i,j)$.
\end{proof}

\section{Reducing Max-Weight \boldmath$k$-Clique to TED}
\label{sec:kclique}

The drawback of the reduction described in Section~\ref{sec:apsp} is the large size of the alphabet.
That is, given a complete weighted $n$-node undirected graph it creates two trees of size $O(n)$
where labels of nodes are distinct, and therefore $|\Sigma|=\Theta(n)$. We would like to refine
the reduction so that $|\Sigma|=O(1)$. However, as the input size of TED on $n$-node trees and alphabet $\Sigma$ with $O(\log n)$-bit integer weights is $\tilde O(n + |\Sigma|^2)$, such a reduction would need to compress the $\tilde O(n^2)$ input size of negative triangle detection considerably. 
To circumvent this barrier, we assume the stronger Max-Weight $k$-Clique
Conjecture, where the input size $\tilde O(n^2)$ is very small compared to the running time bound $O(n^k)$. 

\begin{lemma}
\label{lem:kcliquereduction}
Given a complete undirected $n$-node graph $G$ with weights
in $\{1,\ldots,n^{ck}\}$, we construct, in linear time in the output size, an instance of TED of size $O(n^{k/3+2})$ with alphabet size $|\Sigma|=O(ck)$
such that the maximum weight of an $k$-clique in $G$ can be extracted from the edit distance. 
\end{lemma}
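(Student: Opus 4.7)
The plan is to lift the reduction of Section~\ref{sec:apsp} from negative triangle detection to Max-Weight $k$-Clique: group the $k$ vertices of a candidate clique into three sets of size $k/3$, enumerate the at most $O(n^{k/3})$ $k/3$-cliques $U$ of $G$, and use them to index the spine of a caterpillar-shaped hard instance analogous to Figure~\ref{fig:macrooptimal}, with $O(n^{k/3})$ spine nodes per side. Replace each single hanging leaf $u'$ of Section~\ref{sec:apsp} by a small subtree $T(U)$ of size $O(n^2)$ whose role is to encode, through its edit distance to the mirrored $T(U')$, the bipartite weight $w(U,U')=\sum_{u\in U,\,v\in U'}w(u,v)$. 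The resulting total size is $O(n^{k/3+2})$, and the construction is linear in the output.

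To accommodate weights up to $n^{ck}$ with an alphabet of size $O(ck)$, I would encode each weight in base $n$. Introduce one node type per digit position, and design the gadget so that matching two type-$i$ nodes carries effective cost $n^{i}$; a target weight $w=\sum_{i}\alpha_{i}n^{i}$ is then realized by including $\alpha_{i}$ type-$i$ matchable nodes. To make $T(U)$ interact correctly with every possible $T(U')$ while keeping $|T(U)|=O(n^{2})$, I would place inside each $T(U')$ a dedicated component for \emph{every} vertex $u\in V(G)$ storing the digit decomposition of $\sum_{v\in U'}w(u,v)$, and keep inside $T(U)$ only the components for $u\in U$, arranged so that each is forced to align with its counterpart in $T(U')$. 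Summing across $u\in U$ then yields $w(U,U')$.

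Having prepared the subtrees, I would reproduce the macro structure of Figure~\ref{fig:macrooptimal}. A very large $-M^{2}$ addend, as in \ref{match:1}--\ref{match:3}, forces any optimal solution to select a middle $k/3$-clique $U_{3}$ at the bottom junction and two outer $k/3$-cliques $U_{1},U_{2}$ via the top-to-opposite-leaf matchings. The three cross matchings contribute $w(U_{1},U_{3})+w(U_{2},U_{3})+w(U_{1},U_{2})$ through the three pairs of weight subtrees; folding in the internal weights of $U_{1},U_{2},U_{3}$ (encoded once inside each $T(U_{\ell})$) then recovers the weight of the $k$-clique $U_{1}\cup U_{2}\cup U_{3}$, so that minimizing the edit distance maximizes the $k$-clique weight.

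The main obstacle is the last piece: the telescoping cost $-2M+w(i,j)-w(i-1,j-1)$ of \ref{match:4}, which forced a specific alignment of the top spines in Section~\ref{sec:apsp}, cannot be replicated with a constant alphabet since it depends on the depth indices. I would replace it by an \emph{$I$-gadget}, attached to every spine node, configured to offer a large matching bonus only when consecutive spine nodes align and whose interaction with the weight subtrees telescopes to exactly $w(U_{1},U_{2})$. The technically hardest part of the proof is a structural argument analogous to Theorem~\ref{thm:triangle} but considerably more delicate: after choosing $M\gg n^{ck}$, I would apply Lemma~\ref{lem:redstructure} both at the spine level and, recursively, inside each $T(U)$, and then rule out by case analysis that any red-node configuration other than the intended one is either forbidden by an $\infty$ cost or strictly dominated by the intended configuration. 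With that analysis in place, correctness follows by the same telescoping calculation as in Theorem~\ref{thm:triangle}.
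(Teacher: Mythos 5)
Your high-level plan tracks the paper well: enumerate the $N \le n^{k/3}$ $k/3$-cliques and use them to index the spine, attach small subtrees of size $O(n^2)$ to encode inter-clique weights, represent weights digit-by-digit in base $n$ with one label per digit position, create in each small subtree one component per vertex of $V(G)$ so any two subtrees can interact, and force the intended macro structure with a hierarchy of large penalty/bonus constants. You also correctly spot the hard spot: the telescoping term $-2M + w(i,j) - w(i-1,j-1)$ from the APSP reduction depends on the depth index and cannot be written down directly over a constant alphabet.

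However, your fix for that hard spot is off: you propose to handle the telescoping with the ``$I$-gadget,'' but in the paper the $I$-gadget solves a \emph{different} subproblem. It replaces the $M\cdot k$ terms of \ref{match:2}--\ref{match:3}, i.e.\ it encodes the depth $z$ of the middle clique so that an optimal solution is forced to pick $z=z'=z''$; the $I$-gadget is a path of $k/3$ segments with costs $-M^7 n^i$ that gets matched against specially-labeled auxiliary spine nodes $b''_z, d''_z$, and it has nothing to do with the top spines. The telescoping on the top spines is achieved instead by the micro structures $A_i, C_j$, whose matching cost is made to equal $-M^2 - W(j,i) + W(j-1,i-1)$ by placing \emph{two} connection gadgets inside: one standard $C(i,\cdot,\cdot)$ encoding $-W(i,j)$, and one $C'(i-1,\cdot,\cdot)$ built from the \emph{complemented} weight function $w'(u,v)=M-w(u,v)$ so that it contributes $+W(i-1,j-1)$ up to a fixed offset. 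This ``connection gadget on the shifted index with negated weights'' trick is the key idea you are missing, and without it your construction has no concrete way to reproduce \ref{match:4}. You also do not describe the equality gadget that forces the $x$-th child of the left root to align with the $u_x$-th child of the right root inside a connection gadget (without it, the base-$n$ components are not matched consistently), and you leave the handling of overlapping $k/3$-cliques (the paper sets $w(u,u)=0$ and adds $\Lambda=k^2 n^{ck}$ to every other weight so that non-disjoint triples are dominated) at the level of ``folding in internal weights.'' These are not cosmetic: the telescoping mechanism and the equality gadget are the parts that make a constant-alphabet reduction actually go through.
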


Thus,
an $O(n^{3-\epsilon'})$ time algorithm for TED for sufficiently large $|\Sigma|=O(1)$ implies an $O(n^{(k/3+2)(3-\epsilon')})$
time algorithm for max-weight $k$-Clique. Setting $\epsilon=\epsilon'/6$, we obtain that, for every $c>0$,
there exists $k=\lceil 6/\epsilon \rceil$ such that max-weight $k$-Clique can be solved in time
\[
O(n^{(k/3+2)(3-\epsilon')}) = O(n^{k-\epsilon'k/3+6-2\epsilon'}) = O(n^{k(1-\epsilon)-k\epsilon+6}) = O(n^{k(1-\epsilon)}),
\]
so Conjecture~\ref{conj:maxclique} is violated.

The reduction starts with enumerating all $\frac{k}{3}$-cliques in the graph and identifying them with
numbers $1,2,\ldots,N$, where $N\leq n^{k/3}$. Let $Q(i)$ denote the set of nodes in the $i$-th clique.
Then, for $i,j$ such that $Q(i)\cap Q(j)=\emptyset$,
$W(i,j)$ is the total weight of all edges connecting two nodes in the $i$-th clique or a node in the $i$-th
clique with a node in the $j$-th clique.
Our goal is to calculate the maximum value of $W(i,z)+W(z,j)+W(j,i)$ over $i,j,z$ such that $Q(i),Q(j)$ and $Q(z)$
are pairwise disjoint.
If we define $w(u,u)=0$ and increase every other weight $w(u,v)$ by $\Lambda := k^{2} n^{ck}$, this is equivalent to
maximising over all $i,j,z$. Indeed, if $Q(i),Q(j),Q(z)$ are pairwise disjoint, the total weight is
at least ${k \choose 2} \Lambda$, and otherwise it is at most $\big({k \choose 2} - 1\big) (\Lambda + n^{ck}) < {k \choose 2} \Lambda$.
Note that the new weights are still bounded by $n^{O(ck)}$.

Our construction of a hard instance of size $O(N \cdot \textup{poly}(n))$ is similar to Section~\ref{sec:apsp}, however, 
the costs are set up differently and we attach small additional gadgets to some of the nodes (which is necessary, cf.\ Section~\ref{sec:algo}).
The original two trees (with some extra spine nodes without any leaves) are called the \emph{macro structure}
and all small gadgets are called the \emph{micro structures}. With notation as in Section~\ref{sec:apsp}, the following micro structures are created
for every $i=1,2,\ldots,N$ (see Figure~\ref{fig:macro2}):
\begin{enumerate}
\item $A'_{i}$ attached to the leaf $a'_{i}$,
\item a copy of $I$ attached as the left child of the leaf $c'_{i}$,
\item $C'_{i}$ attached as the right child of the leaf $c'_{i}$,
\item $A_{i}$ attached to the spine node $a_{i-1}$ between the previously existing children $a_{i}$ and $a'_{i-1}$,
\item $B_{i}$ attached to the spine node $b_{i}$ between the previously existing children $b_{i+1}$ and $b'_{i}$,
\item $C_{i}$ attached to the spine node $c_{i-1}$ as the rightmost child,
\item $D_{i}$ attached to the spine node $d_{i}$ between the previously existing children $d'_{i}$ and $d_{i+1}$.
\end{enumerate}
Notice that $A_{i}$ is attached above $a_{i}$ (and similarly $C_{j}$ is attached above $c_{j}$). 
Therefore, we need to create dummy spine nodes $a_{0}$ and $c_{0}$. We also insert an
additional spine node $b''_{i}$ between $b_{i}$ and $b_{i+1}$ and similarly $d''_{i}$ between
$d_{i}$ and $d_{i+1}$, for every $i=1,2,\ldots,N-1$. See Figure~\ref{fig:macro2}.

\begin{figure}[h!]
\begin{center}
\includegraphics[scale=0.8]{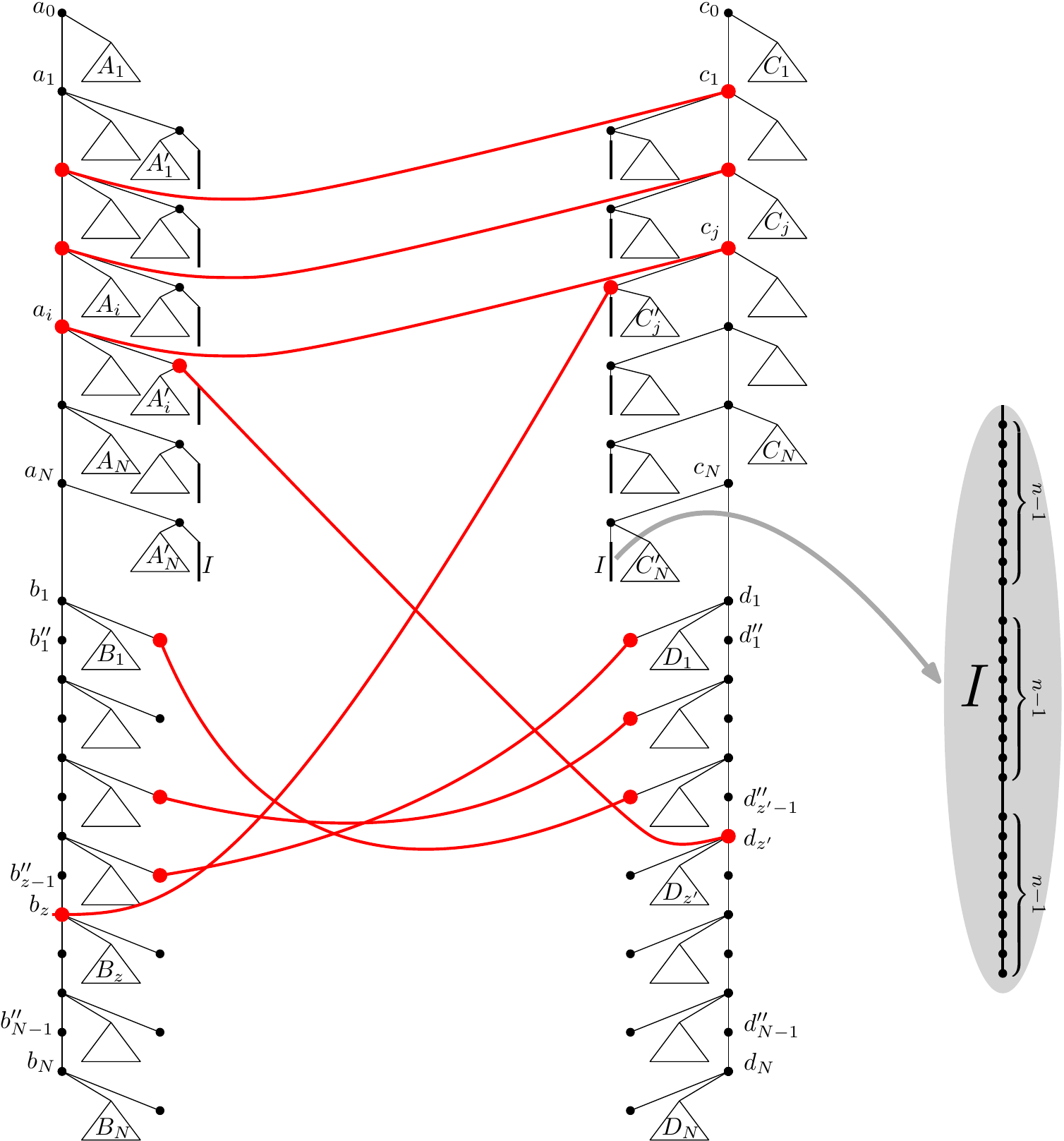}
\caption{A hard instance of TED constructed for a given instance of max-weight $k$-clique. \label{fig:macro2}}
\end{center}
\end{figure}

The costs in the macro structure are chosen as follows, where again $M$ is a sufficiently large value (picking $M=n^{O(ck)}$ will suffice):
\begin{enumerate}
\item $\cmatch(b_{z},c'_{j}) = -M^{8}$ for every $z=1,2,\ldots,N$ and $i=1,2,\ldots,N$,
\item $\cmatch(a'_{i},d_{z'}) = -M^{8}$ for every $i=1,2,\ldots,N$ and $z'=1,2,\ldots,N$,
\item $\cmatch(b'_{z},d'_{z'}) = -M^{7}\cdot 2$ for every $z=1,2,\ldots,N$ and $z'=1,2,\ldots,N$,
\item $\cmatch(a_{i},c_{j}) = -M^{3}\cdot 2+M^{2}$ for every $i=1,2,\ldots,N$ and $j=1,2,\ldots,N$.
\end{enumerate}
Additionally, the extra spine nodes $b''_{i}$ and $d''_{i}$ can be matched to some of the nodes
of $I$. Each copy of $I$ is a path consisting of $k/3$ segments $I_{0},I_{1},\ldots,I_{k/3-1}$
of length $n-1$, where the root of the whole $I$ belongs to $I_{0}$.
The label of every $u\in I_{i}$ is the same and the costs are set so that $\cmatch(u,u)=-M^{7}\cdot n^{i}$.
The label of every $b''_{z}$ (and also $d''_{z'}$) is chosen as the label of every $u\in I_{m}$, where $n^{m}$ is the largest power of $n$ dividing $N-z$.  
The cost of matching any other two labels used in the macro structure is set to infinity.
For the nodes belonging to the other micro structures, the cost of matching is at least $-M^{6}$
and will be specified precisely later. This is enough to enforce the following structural property.

\begin{lemma}
\label{lem:optimalmacro}
For sufficiently large $M$, any optimal matching has the following structure:
there exist $i,j,z$ such that $a'_{i}$ is matched to $d_{z}$, $c'_{j}$ is matched to $b_{z}$,
$b'_{1}$ is matched to $d'_{z-1}$, $b'_{2}$ is matched to $d'_{z-2}$, \ldots, $b'_{z-1}$ is
matched to $d'_{1}$. Furthermore, if $z<N$ then $b''_{z}$ is matched to a descendant
of $c'_{j}$ and $d''_{z}$ is matched to a descendant of $a'_{i}$.
Ignoring the spine nodes
$a_{1},\ldots,a_{i},c_{1},\ldots,c_{i}$ and all micro structures that are not copies of $I$ the cost of any such solution is $-M^{8}\cdot 2-M^{7}\cdot 2(N-1)$.
\end{lemma}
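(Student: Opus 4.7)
The argument follows the nested-magnitude strategy used in Theorem~\ref{thm:triangle}, but now across four scales $M^{8}\gg M^{7}\gg M^{6}\gg M^{3}$. Since the combined tree size is $O(N\cdot\poly(n))$ and, by assumption, no matching cost inside non-$I$ micro structures exceeds $M^{6}$ in magnitude, contributions from these can never compensate for a missing $M^{7}$ saving, and a fortiori not for a missing $M^{8}$ saving, provided $M$ is chosen sufficiently large (e.g.\ $M=n^{\Theta(ck)}$).

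For the upper bound I would exhibit the candidate matching directly: take $(a_\ell,c_\ell)$ for $\ell=1,\ldots,i$ on the top spines (cost $O(i\cdot M^{3})$, which is ignored), the two optional spine-to-leaf matches $(b_z,c'_j)$ and $(a'_i,d_z)$ for a total of $-2M^{8}$, the reversed $b'$-$d'$ pairs $(b'_k,d'_{z-k})$ for $k=1,\ldots,z-1$ contributing $-2(z-1)M^{7}$, and finally the surviving $b''$ and $d''$ nodes absorbed into the copy of $I$ below $c'_j$ respectively into the mirror gadget below $a'_i$. Because segment $I_m$ has $n-1$ nodes whose label matches every $b''_{z'}$ with $n^{m}\|(N-z')$, saturating every segment contributes $(n-1)\sum_{m=0}^{k/3-1}M^{7}\cdot n^{m}=M^{7}(N-1)$ per side, accounting for the remaining $-2M^{7}(N-1)$.

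For the matching lower bound I would peel off the layers in order. On the $M^{8}$ level, the only finite $M^{8}$-magnitude entries are $(b_z,c'_j)$ and $(a'_i,d_{z'})$; both are forced (otherwise the cost is only $-M^{8}+O(M^{7}\cdot\poly(n))>-2M^{8}$), and at most one of each kind is possible, for two copies of the same kind (say $(b_{z_1},c'_{j_1}),(b_{z_2},c'_{j_2})$ with $z_1<z_2$) would force $b_{z_1}$ to be an ancestor of $b_{z_2}$ in $F$ while $c'_{j_1},c'_{j_2}$ are incomparable in $G$. On the $M^{7}$ level, the $b'$-$d'$ pairs can match only in reverse order (since preorder visits the $b'$-leaves bottom-up in $F$ and top-down in $G$), giving at most $z-1$ of them; any $b''_m$ with $m<z$ has no compatible ancestor of $c'_j$ to match in $G$ and is therefore deleted, whereas $b''_m$ for $m\geq z$ can match only inside the copy of $I$ hanging under $c'_j$ due to its unique label, and symmetrically for the $d''$ nodes.

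The subtlest piece, which is where I expect most of the work to go, is showing that any optimum necessarily has $z=z'$. The $I$-gadget is designed to distribute this constraint: if $z\neq z'$, then on one of the two sides the pool of surviving $b''$ (or $d''$) nodes of a given label would exceed the $n-1$ capacity of the matching $I$-segment and at least $M^{7}$ of the target saving would be lost, while the $O(M^{6}\cdot\poly(n))$ slack in the remaining micro structures is too small to repair it. Once $z=z'$ is locked in, the three layers combine to give matchings of exactly the claimed shape, and summing up the counted matches the cost is $-2M^{8}-2M^{7}(N-1)$, as required.
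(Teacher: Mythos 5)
The high-level framing (separating magnitude scales $M^{8}\gg M^{7}\gg M^{6}$, using Lemma~\ref{lem:redstructure} to force a single $(a'_i,d_{z'})$ and a single $(b_z,c'_j)$ match, and treating the non-$I$ micro structures as negligible) is sound and matches the paper. However, there are two genuine gaps in the $M^{7}$-level accounting, which is exactly where the lemma's content lies.

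First, the upper-bound arithmetic is wrong. You claim that ``saturating every segment'' of each copy of $I$ contributes $(n-1)\sum_{m=0}^{k/3-1}M^{7}n^{m}=M^{7}(N-1)$ per side, and then you additionally count $-2(z-1)M^{7}$ from the $b'$--$d'$ pairs; this double-counts and gives $-2M^{8}-2(z-1)M^{7}-2M^{7}(N-1)$, not the stated $-2M^{8}-2M^{7}(N-1)$. In fact a copy of $I$ below $c'_{j}$ can only be matched against the $N-z$ nodes $b''_{z},\ldots,b''_{N-1}$ that lie below $b_{z}$, so it is \emph{not} fully saturated unless $z=1$; the correct contribution per copy is $-M^{7}(N-z)$, obtained by writing $N-z$ in base $n$ and matching $\alpha_{i}$ nodes to segment $I_{i}$. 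The total $-2M^{7}(N-1)$ then comes from $-M^{7}(N-z)-M^{7}(N-z')-2M^{7}(z'-1)$ with $z=z'$, not from the $I$-gadgets alone.

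Second, your argument that $z=z'$ is forced is based on a wrong picture. You suggest that if $z\neq z'$ the pool of surviving $b''$ (or $d''$) nodes would ``exceed the $n-1$ capacity'' of an $I$-segment, but the $I$-gadget never overflows; reducing $z'$ below $z$ only makes \emph{more} $d''$-nodes available and makes the $I$-saving \emph{larger} (more negative). The actual reason $z=z'$ is forced is a balancing argument between two competing $M^{7}$-scale quantities: the combined $I$-saving is $-M^{7}(2N-z-z')$, while the number of matchable $b'$--$d'$ pairs is capped at $\min(z,z')-1$, giving $-2M^{7}(\min(z,z')-1)$. Writing (WLOG $z\geq z'$) the total as $-M^{7}(2N-2-z+z')$ shows it is minimized exactly when $z=z'$, each unit of $|z-z'|$ costing $M^{7}$. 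This trade-off between the $I$-gadget and the reversed leaf pairs is the key mechanism, and it is missing from your argument. You should also make explicit (via a greedy exchange, using the fact that one node of $I_i$ outweighs all of $I_0\cup\cdots\cup I_{i-1}$) why the $I$-contribution is exactly $-M^{7}(N-z)$ and not merely bounded by it.
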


\begin{proof}
For sufficiently large $M$, any optimal solution must match $a'_{i}$ to $d_{z}$ and $c'_{j}$ to $b_{z'}$,
for some $i,j,z,z'$, as otherwise its cost is larger than $-M^{8}\cdot 2$. By the reasoning described in 
Lemma~\ref{lem:redstructure}, these $i,j,z,z'$ are uniquely defined for any optimal solution.

Nodes from the copy of $I$ attached as the left child of the leaf $c'_{j}$ can be matched
to some spine nodes below $b_{z}$, nodes from the copy of $I$ attached as the right
child of the leaf $a'_{i}$ can be matched to some spine nodes below $d_{z'}$, and no other
nodes from the copies of $I$ can be matched. We claim that the total contribution of these
nodes is $-M^{7}(N-z)$ and $-M^{7}(N-z')$, respectively. By symmetry, it is enough to justify
the former. Observe that the cost of matching a single $u\in I_{i}$ is smaller than the total cost of matching
all nodes from $I_{0}\cup\ldots I_{i-1}$, therefore an optimal solution must match as many
nodes to nodes from $I_{k/3-1}$ as possible. Looking at the expansions of all numbers
$N-z,N-(z+1),\ldots,N-(N-1)$ in base $n$, where $N-z=\sum_{i=0}^{k/3-1}\alpha_{i}n^{i}$,
we see that there are $\alpha_{k/3-1}$ such nodes, namely the nodes $b'_{N-z'}$ with $z \le z' < N$ and $N-z'$ divisible by $n^{k/3-1}$. Then, an optimal solution must
match as many nodes to nodes from $I_{k/3-2}$ as possible to nodes above the topmost node
matched to a node from $I_{k/3-1}$. Looking again at the same expression, we see that
there are $\alpha_{k/3-2}$ such nodes, namely the nodes $b'_{N-z'}$ with $z \le z' < N - \alpha_{k/3-1} n^{k/3-1}$ and $N-z'$ divisible by $n^{k/3-2}$. Continuing in the same fashion, we obtain that
there are $\alpha_{i}$ nodes matched to nodes from $I_{i}$, making the total cost $-M^{7}(N-z)$
as claimed.

We assume without loss of generality that $z\geq z'$. Then, an optimal solution must match
$d'_{z'-1}$ to $b'_{x_{z'-1}}$, $d'_{z'-2}$ to $b'_{x_{z'-2}}$, \ldots, and $d'_{1}$ to $b'_{x_{1}}$,
for some $z \geq x_{1} > \ldots > x_{z'-1} \geq 1$, as otherwise its cost is larger than
$-M^{8}\cdot 2-M^{7}(2N-z-z')-M^{7}\cdot 2(z'-1)$. Rewriting the cost we obtain
$-M^{8}\cdot 2-M^{7}(2N-2-z+z')$, so recalling our assumption $z \ge z'$ we see that in fact $z=z'$ as
otherwise its cost is larger than $-M^{8}\cdot 2-M^{7}\cdot 2(N-1)$.
\end{proof}

We are now ready to state properties of the remaining micro structures.
Let $\cmatch(T_{1},T_{2})$ denote the cost of matching two trees $T_{1}$ and $T_{2}$. Then, we require that:
\begin{enumerate}
\item $\cmatch(A'_{i},D_{z'}) = -M^{6} - M^{3}(N-i) - W(i,z')$ for every $i=1,2,\ldots,N$ and $z'=1,2,\ldots,N$,
\item $\cmatch(B_{z},C'_{j}) = -M^{6} - M^{3}(N-j) - W(z,j)$ for every $z=1,2,\ldots,N$ and $j=1,2,\ldots,N$.
\item $\cmatch(A_{i},C_{j}) = -M^{2} - W(j,i) + W(j-1,i-1)$ for every $i=2,3,\ldots,N$ and $j=2,3,\ldots,N$.
\item $\cmatch(A_{i},C_{1}) = -M^{5}-M^{3}(i-1) - W(1,i)$ for every $i=1,2,\ldots,N$,
\item $\cmatch(A_{1},C_{j}) = -M^{5}-M^{3}(j-1) - W(j,1)$ for every $j=1,2,\ldots,N$.
\end{enumerate}
The labels of the nodes in the micro structures should be partitioned into disjoint subsets corresponding
to the following micro structures:
\begin{enumerate}
\item $\{A'_{1},A'_{2},\ldots,A'_{N},D_{1},D_{2},\ldots,D_{N}\}$,
\item $\{B_{1},B_{2},\ldots,B_{N},C'_{1},C'_{2},\ldots,C'_{N}\}$,
\item $\{A_{1},A_{2},\ldots,A_{N},C_{1},C_{2},\ldots,C_{N}\}$,
\end{enumerate}
so that two nodes can be matched only if their labels belong to the same subset.
The cost of matching any node of $A'_{i},D_{z'},B_{z},C'_{j}$ should be at least $-M^{6}$.
The cost of matching any node of $A_{i},C_{j}$ should be at least $-M^{2}$, except
that the root of $A_{i}$ ($C_{j}$) can be matched to the root of $C_{1}$ ($A_{1}$) with cost 
larger than $-M^{5}-M$ but at most $-M^{5}$,
and, for any non-root node of $A_i$ ($C_{j}$) and for any non-root node of
$C_{1}$ ($A_{1}$), the cost of matching is larger than $-M^{4}$.
Finally, every $A_{i}$ and $C_{j}$ should consist of $O(n^{2})$ nodes.
Now we can show that, assuming these properties, any optimal solution has a specific structure.

\begin{lemma}
\label{lem:optimal}
For sufficiently large $M$, the total cost of an optimal matching is
\[
-M^{8}\cdot 2 - M^{7}\cdot 2(N-1) - M^{6}\cdot 2 - M^{5} - M^{3}\cdot 2N + M^{2} - \max_{i,j,z}\{W(i,z) + W(z,j) + W(j,i)\}.
\]
\end{lemma}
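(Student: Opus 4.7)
The plan is to bootstrap Lemma~\ref{lem:optimalmacro}, which already fixes the macro structure of any optimal matching, and then analyze the micro-structure matches in two stages that mirror the logic of Theorem~\ref{thm:triangle}. First, by the assumption that every matching cost within the label families $\{A', D\}$ and $\{B, C'\}$ is at least $-M^{6}$, while $\cmatch(A'_{i}, D_{z'}) = -M^{6} - M^{3}(N-i) - W(i,z')$ and $\cmatch(B_{z}, C'_{j}) = -M^{6} - M^{3}(N-j) - W(z,j)$ reach exactly $-M^{6}$ at leading order, any deviation from matching $A'_{i}$ to $D_{z}$ and $B_{z}$ to $C'_{j}$ (for the specific $i, j, z$ pinned down by Lemma~\ref{lem:optimalmacro}) loses the $-M^{3}(N-\cdot)$ bonus and hence increases the total cost. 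These two forced gadget matches contribute exactly $-2M^{6} - M^{3}\bigl((N-i)+(N-j)\bigr) - W(i,z) - W(z,j)$.

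Next I would analyze the remaining degrees of freedom, namely matches between spine pairs $(a_{\ell}, c_{m})$ and the attached micro trees $A_{\ell}, C_{m}$. By the red-subtree characterization of Lemma~\ref{lem:redstructure}, the matched spine indices form a common subsequence, and since $A_{\ell}$ hangs off $a_{\ell-1}$ and $C_{m}$ off $c_{m-1}$, the match $A_{\ell} \leftrightarrow C_{m}$ is available only when $(a_{\ell-1}, c_{m-1})$ is also matched. Each spine match contributes $-2M^{3} + M^{2}$ and each internal $A_{\ell}/C_{m}$ match at least $-M^{2}$, so the $M^{3}$ scale dominates and the optimum uses as many spine matches as possible. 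Adapting the index-shift exchange argument from the end of the proof of Theorem~\ref{thm:triangle}, I would show that any optimal assignment is the diagonal $(a_{\ell}, c_{\ell-(i-j)})$ for $\ell = i-j+1, \ldots, i$ (assuming WLOG $i \geq j$), together with the $j-1$ internal matches $A_{\ell} \leftrightarrow C_{\ell-(i-j)}$ for $\ell = i-j+2, \ldots, i$, and one decisive root-root match $A_{i-j+1} \leftrightarrow C_{1}$ that contributes the $-M^{5}$ term. The internal micro costs $-M^{2} - W(m, \ell) + W(m-1, \ell-1)$ telescope: all intermediate $W$-boundary terms cancel pairwise, leaving $-W(j, i)$ together with a residual $+W(1, i-j+1)$ that precisely cancels the $-W(1, i-j+1)$ hidden inside the root-root match's cost.

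Collecting all contributions and grouping by powers of $M$ then yields the claimed total $-2M^{8} - 2M^{7}(N-1) - 2M^{6} - M^{5} - 2M^{3} N + M^{2} - \bigl(W(i,z) + W(z,j) + W(j,i)\bigr)$ as a function of the free parameters $i, j, z$, and hence an optimal matching selects $i, j, z$ maximizing $W(i,z) + W(z,j) + W(j,i)$. The step I expect to be the main obstacle is making the diagonal-optimality argument fully rigorous. The stated calibrations (non-root $A, C$ matches cost at least $-M^{2}$; root-to-root matches $A_{\bullet} \leftrightarrow C_{1}$ or $A_{1} \leftrightarrow C_{\bullet}$ lie in the interval $(-M^{5} - M, -M^{5}]$; mixed root/non-root matches exceed $-M^{4}$) are designed to force exactly one root-root match and to prevent reshuffling via mismatched gadgets, but verifying that every possible deviation (skipping a spine match, choosing the symmetric root-root pairing, or shifting the diagonal) strictly loses at either the $M^{3}$ or $M^{5}$ scale requires a careful case analysis combining the red-subtree constraint with the exchange argument, and this is where the bulk of the technical work lies.
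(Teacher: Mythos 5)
Your proposal follows the same overall outline as the paper's proof: invoke Lemma~\ref{lem:optimalmacro} to fix the macro structure, deduce the forced gadget matches $A'_{i}\leftrightarrow D_{z}$ and $B_{z}\leftrightarrow C'_{j}$, and then analyze the spine diagonal $a_{*}/c_{*}$ together with the micro structures $A_{*}/C_{*}$ via an exchange argument and telescoping. The arithmetic decomposition you give is consistent with the paper's. However, there are two substantive gaps.

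First, the justification you give for the forced matches $A'_{i}\leftrightarrow D_{z}$, $B_{z}\leftrightarrow C'_{j}$ is cost-based, but the correct argument is structural. The $-M^{6}$ per-node floor does not by itself rule out that some other pair $B_{z''}\leftrightarrow C'_{j''}$ also participates; if the red-subtree geometry allowed it, matching \emph{more} such gadget pairs would only decrease the cost, which would break the formula. What actually forces exactly one match in each of the families $\{A',D\}$ and $\{B,C'\}$ is that Lemma~\ref{lem:optimalmacro} pins $b'_{1},\ldots,b'_{z-1}$, $d'_{1},\ldots,d'_{z-1}$, $b''_{z}$, $d''_{z}$ as already matched, and, combined with the structure of valid red subtrees (Lemma~\ref{lem:redstructure}) and the label partition, this makes every node of $B_{1},\ldots,B_{z-1},B_{z+1},\ldots,B_{N}$ and $D_{1},\ldots,D_{z-1},D_{z+1},\ldots,D_{N}$ unmatchable, while nodes of $B_{z}$ and $D_{z}$ can only be matched into $C'_{j}$ and $A'_{i}$ respectively. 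This geometric reasoning is not replaceable by the $-M^{3}(N-\cdot)$ calibration.

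Second, the diagonal-optimality step is explicitly deferred in your write-up, and it is genuinely the load-bearing part of the paper's proof. Two ingredients are needed that you do not supply: (a) the observation that one cannot simultaneously match the root of $C_{1}$ to the root of some $A_{i'}$ and the root of $A_{1}$ to the root of some $C_{j'}$ unless those roots are matched to each other, which is again a red-subtree structural constraint and not a calibration fact; and (b) the three-way case analysis showing $i'=i-j+1$ (if $i-i'+1<j$ and $i'>1$ decrease $i'$; if $i-i'+1<j$ and $i'=1$ re-match $A_{1}$ to $C_{2}$; if $i'<i-j+1$ increase $i'$), each time trading a $\pm M^{3}$ change against the cost $\cmatch(A_{i'},C_{1})$. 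Without working these out, the claim that the optimum is the clean diagonal with the root-root match $A_{i-j+1}\leftrightarrow C_{1}$ is unproven, and with it the $-M^{5}$ and $+M^{2}$ terms in the final formula.
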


\begin{proof}
Consider $i,j,z$ maximizing $W(i,z) + W(z,j) + W(j,i)$.
We may assume that $i\geq j$. Then, it is possible to choose the following matching:
\begin{enumerate}
\item $b_{k}$ to $c'_{j}$ with cost $-M^{8}$,
\item some nodes from the copy of $I$ being the left child of $c'_{j}$ to some spine nodes
below $b_{z}$ with total cost $-M^{7}(N-z)$,
\item $a'_{i}$ to $d_{k}$ with cost $-M^{8}$,
\item some nodes from the copy of $I$ being the right child of $a'_{i}$ to some spine nodes
below $d_{z}$ with total cost $-M^{7}(N-z)$,
\item $b'_{1}$ to $d'_{z-1}$, $b'_{2}$ to $d'_{z-2}$, \ldots, $b'_{z-1}$ to $d'_{1}$ with cost $-M^{7}\cdot 2$ each,
\item $a_{i}$ to $c_{j}$, $a_{i-1}$ to $c_{j-1}$, \ldots, $a_{i-j+1}$ to $c_{1}$ with cost $-M^{3}\cdot 2+M^{2}$ each,
\item $A'_{i}$ to $D_{z}$ with cost $-M^{6}-M^{3}(N-i)-W(i,z)$,
\item $B_{z}$ to $C'_{j}$ with cost $-M^{6}-M^{3}(N-j)-W(z,j)$,
\item $A_{i}$ to $C_{j}$, $A_{i-1}$ to $C_{j-1}$, \ldots, $A_{i-j+2}$ to $C_{2}$ with costs
$-M^{2}-W(j,i)+W(j-1,i-1)$, $-M^{2}-W(j-1,i-1)+W(j-2,i-2)$, \ldots, $-M^{2}-W(2,i-j+2)+W(1,i-j+1)$.
\item $A_{i-j+1}$ to $C_{1}$ with cost $-M^{5}-M^{3}(i-j)-W(1,i-j+1)$.
\end{enumerate}
Summing up and telescoping, the total cost is
\begin{align*}
-&M^{8}\\
-&M^{7}(N-z)\\
-&M^{8}\\
-&M^{7}(N-z)\\
-&M^{7}\cdot 2(z-1)\\
-&M^{3}\cdot 2j+M^{2}\cdot j\\
-&M^{6}-M^{3}(N-i)-W(i-z)\\
-&M^{6}-M^{3}(N-j)-W(z,j)\\
-&M^{2}(j-1)-W(j,i)-M^{5}-M^{3}(i-j)\\
=& -M^{8}\cdot 2 - M^{7}\cdot 2(N-1) - M^{6}\cdot 2 - M^{5} - M^{3}\cdot 2N + M^{2} - W(i,z) - W(z,j) - W(j,i).
\end{align*}

For the other direction, we need to argue that every solution has cost at least
$-M^{8}\cdot 2 - M^{7}\cdot 2(N-1) - M^{6}\cdot 2 - M^{5} - M^{3}\cdot 2N + M^{2} - \max_{i,j,z}\{W(i,z) + W(z,j) + W(j,i)\}$. We start with invoking Lemma~\ref{lem:optimalmacro} and analyse the remaining
small micro structures. Due to leaves $b'_{1},\ldots,b'_{z-1},d'_{1},\ldots,d'_{z-1}$ being already
matched, no node from $B_{1},\ldots,B_{z-1},D_{1},\ldots,D_{z-1}$ can be matched (as they can in general only be matched to $A'_*$'s and $C'_*$'s). Then,
due to $b''_{z}$ and $d''_{z}$ being already matched (or $z=N$) no node from
$B_{z+1},\ldots,B_{N},D_{z+1},\ldots,D_{N}$ can be matched, and nodes from $B_{z}$ or $D_{z}$
can be only matched to nodes from $C'_{j}$ or $A'_{i}$, respectively. The cost incurred by all such
nodes is $\cmatch(A'_{i},D_{z})+\cmatch(B_{z},C'_{j})$, making the total cost
$-M^{8}\cdot 2-M^{7}\cdot 2(N-1)-M^{6}\cdot 2-M^{3}(2N-i-j)-W(i,z)-W(z,j)$.
It remains to analyse the contribution of all spine nodes $a_{1},\ldots,a_{N},b_{1},\ldots,b_{N}$
and nodes from micro structures $A_{1},\ldots,A_{N},C_{1},\ldots,C_{N}$.

Consider the micro structures $C_{1}$ and $A_{1}$. Matching their roots to roots of some
$A_{i'}$ and $C_{j'}$, respectively, decreases the total cost by at least $-M^{5}$, which is much smaller than
the cost of matching the remaining nodes. Furthermore, it is not possible to match
both the root of $C_{1}$ to the root of some $A_{i'}$ and the root of $A_{1}$ to the root
of some $C_{j'}$ at the same time, unless the root of $A_{1}$ is matched to the root of $C_{1}$.
Therefore, an optimal solution matches exactly one of them or both to each other, say we match the root of $C_{1}$ to the root
of some $A_{i'}$, thus adding $\cmatch(A_{i'},C_{1})$ to the total cost. 
Due to $a'_{i}$ being matched to $d_{z}$, $i'\leq i$ holds. Now, unless $i'=1$,
no node from $A_{1}$ can be matched to a node from $C_{j'}$, so the cost of matching any
$a_{i'}$ to $c_{j'}$ is now much smaller than the cost of matching nodes in the remaining
micro structures (for each such node, the cost is at least $-M^{2}$, and there are at most
$O(n^{2})$ of them in a single micro structure, so the total cost contributed by a single micro structure
is larger than $-M^{3}$ for $M$ large enough) and, by Lemma~\ref{lem:redstructure},
only nodes $a_{1},\ldots,a_{i},c_{1},\ldots,c_{j}$ can be matched, so an optimal solution matches 
as many such pairs as possible. Due to the root of $C_{1}$ being matched to the
root of $A_{i'}$, only nodes $a_{i'},a_{i'+1},\ldots,a_{i}$ and $c_{1},\ldots,c_{j}$ can be
matched, so there are $\min(i-i'+1,j)$ such matched pairs.
If $i-i'+1<j$ and $i'>1$ then $C_{1}$ can be matched with
$A_{i'-1}$ instead of $A_{i'}$ which allows for an additional pair and decreases the total
cost (because matching a pair $(a_*,c_*)$ adds $-M^{3}\cdot 2$ to the cost while decreasing $i'$ by one adds
$M^{3}$ to the cost $\cmatch(A_{i'},C_1)$, up to lower order terms). If $i-i'+1<j$ and $i'=1$ then $A_{1}$ can be matched with
$C_{2}$ instead of $C_{1}$ while keeping the number of matched pairs intact to decrease the total cost.
So $i-i'+1\geq j$ (implying $i\geq j$, which is due to our initial assumption that the root of $C_{1}$
is matched to the root of some $A_{i'}$). Then, if $i'<i-j+1$, $C_{1}$ can be matched with
$A_{i'+1}$ instead of $A_{i'}$ without changing the number of matched pairs to decrease the
total cost. Thus, $i'=i-j+1$ and $a_{i}$ is matched to $c_{j}$, $a_{i-1}$ to $c_{j-1}$, \ldots,
and $a_{i-j+1}$ to $c_{1}$, Then nodes from $A_{i}$ can be only matched to nodes from $C_{j}$,
nodes from $A_{i-1}$ only to nodes from $C_{j-1}$, and so on. By the same calculations as in the previous
paragraph, the total cost is therefore $-M^{8}\cdot 2 - M^{7}\cdot 2(N-1) - M^{6}\cdot 2 - M^{5} - M^{3}\cdot 2N + M^{2} - \max_{i,j,z} \{W(i,z) + W(z,j) + W(j,i)\}$.
\end{proof}

To complete the proof we need to design the remaining micro structures.
We start with describing some preliminary gadgets that will be later appropriately composed to
obtain the micro structures $A_{i},A'_{i},B_{z},C_{j},C'_{j},D_{z'}$ with the required properties. Each
such gadget consists of two trees, called left and right, and we are able to exactly calculate the cost of matching
them. The main difficulty here is that we need to keep the size of the alphabet small, so for instance
we are not able to create a distinct label for every node of the original graph. At this point it is
also important to note that we can assume $M=n^{O(ck)}$, i.e., there is a constant $d = O(ck)$ such that all weights constructed above have absolute value less than $n^d$.

\paragraph{Decrease gadget $D(x)$.} \label{decreasegadget} For any $x\in \{0,\ldots,n^d-1\}$, the edit distance of the left and right tree of $D(x)$ is
$-x$, and furthermore the right tree does not depend on the value of $x$. 

This is obtained
by representing $x$ in base $n$ as $x=\sum_{i=0}^{d-1}\alpha_{i}n^{i}$. The left tree is a
path composed of $d$ segments, the $i$-th segment consisting of $\alpha_{i}$ nodes. The right
tree is a path composed of $d$ segments, each consisting of $n-1$ nodes. Nodes from the
$i$-th segment of the left tree can be matched with nodes from the $i$-th segment of the right
tree with cost $-n^{i}$, so the total cost is $-x$, see Figure~\ref{fig:decreaseequality} (left).
We reuse the same set of distinct labels in
every decrease gadget of the same type, hence we need only $O(d)$ distinct labels in total.
Furthermore, note that the cost of matching the left tree of $D(x)$ with any tree is at least $-x$
and the cost of matching any node of $D(x)$ is $-n^{i}$ for some $i\in\{0,1,\ldots,d-1\}$.

\begin{figure*}[h!]
    \centering
    \begin{subfigure}{0.5\textwidth}
       \centering
\includegraphics[scale=0.8]{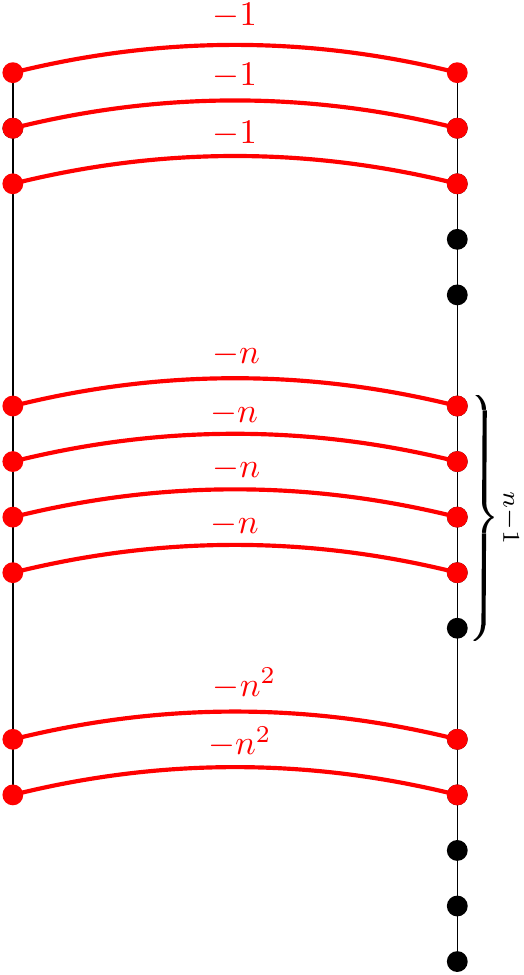}
    \end{subfigure}
    \begin{subfigure}{0.5\textwidth}
        \centering
\includegraphics[scale=0.8]{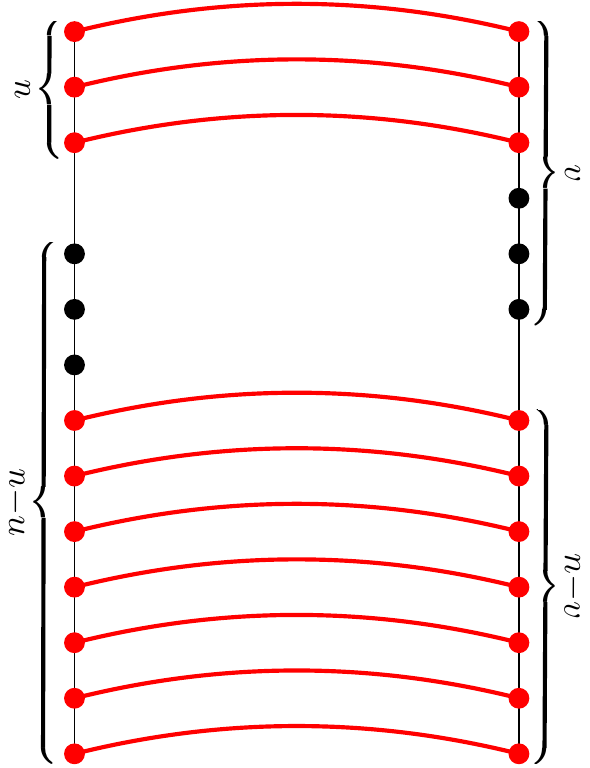}
    \end{subfigure}
\caption{\label{fig:decreaseequality} Left: Decrease gadget built for $d=3$, $n=6$ and $x=n^{2}\cdot 2+n\cdot 4+3$. Right: Equality gadget for $u=3$, $v=6$.}

\end{figure*}

\paragraph{Equality gadget $E(u,v,c_{=})$.} For any $u,v\in \{1,\ldots,n\}$
and $c_{=}\in \{0,\ldots,n^{d}-1\}$, the edit distance of the left and right tree of $E(u,v,c_{=})$ is
$-c_{=}\cdot n$ if $u=v$ and at least $-c_{=}\cdot n+c_{=}$ otherwise. Also, the left tree does not
depend on $v$ and the right tree does not depend on $u$.

The left tree is a path composed of a segment of length $u$ and a segment of length $n-u$. The right
tree is a path composed of a segment of length $v$ and a segment of length $n-v$. Nodes from the first
segment of the left tree can be matched with nodes from the first segment of the right tree with cost $-c_{=}$,
and similarly for the second segments. Then, if $u=v$ we can match all nodes in both trees,
so the total cost is $-c_{=}\cdot n$. Otherwise, we can match at most $n-1$ nodes, making
the total cost at least $-c_{=}\cdot n+c_{=}$, see Figure~\ref{fig:decreaseequality} (right). Furthermore, note that
the total cost of matching the left tree of $E(u,v,c_{=})$ with any tree is at least
$-c_{=}\cdot n$ and the cost of matching any node of $E(u,v,c_{=})$ is $-c_{=}$.

\paragraph{Connection gadget $C(i,j,M)$.} For any $i,j\in\{1,\ldots,N\}$ and sufficiently large
$M\in \{0,\ldots,n^{d}-1\}$, the edit distance of the left and right tree of $C(i,j,M)$ is $-M-W(i,j)$.
The left tree does not depend on $j$ and the right tree does not depend on $i$.

Let $\{u_{1},\ldots,u_{k/3}\}$ and $\{v_{1},\ldots,v_{k/3}\}$
be the $k/3$-cliques corresponding to $i$ and $j$, respectively,
where $u_{1}<u_{2}<\ldots,u_{k/3}$ and $v_{1}<v_{2}<\ldots<v_{k/3}$.
Recall that $W(i,j)$ denotes the total weight of all edges
connecting two nodes in the $i$-th clique or a node in the $i$-th clique with a node in the $j$-th
clique, where we assume that $w(u,u)=0$.
We construct the gadget $C(i,j,M)$ as follows.
The root of the left tree has degree $1+k/3$ and the root of the right tree has degree $1+n$.
Their rightmost children correspond to the root of the left and the right trees of
$D(\sum_{x<y}w(u_{x},u_{y}))$, respectively. Every other child of the left root can be matched
with every other child of the right root with cost $-M_2$ (we fix $M_1$ and $M_2$ later). Intuitively, we would like the
$x$-th child of the the left root to be matched with the $u_{x}$-th child of the right root,
and then contribute $-\sum_{y}w(u_{x},v_{y})$ to the total cost, so that summing up over
$x=1,2,\ldots,k/3$ we obtain the desired sum.
To this end, we attach the left tree of $E(u_{x},\cdot,M_{1})$
and the right tree of $D(\cdot)$ to the $x$-th child of the left root. Similarly, we attach
the right tree of $E(\cdot,t,M_{1})$ and the left tree of $D(\sum_{y}w(t,v_{y}))$ to
the $t$-th child of the right root.
Here we use $\cdot$ to emphasise that a particular tree does not depend on the particular
value of the parameter.
All decrease gadgets are of the same type. See Figure~\ref{fig:connection}.

\begin{figure}[h!]
\begin{center}
\includegraphics[scale=0.8]{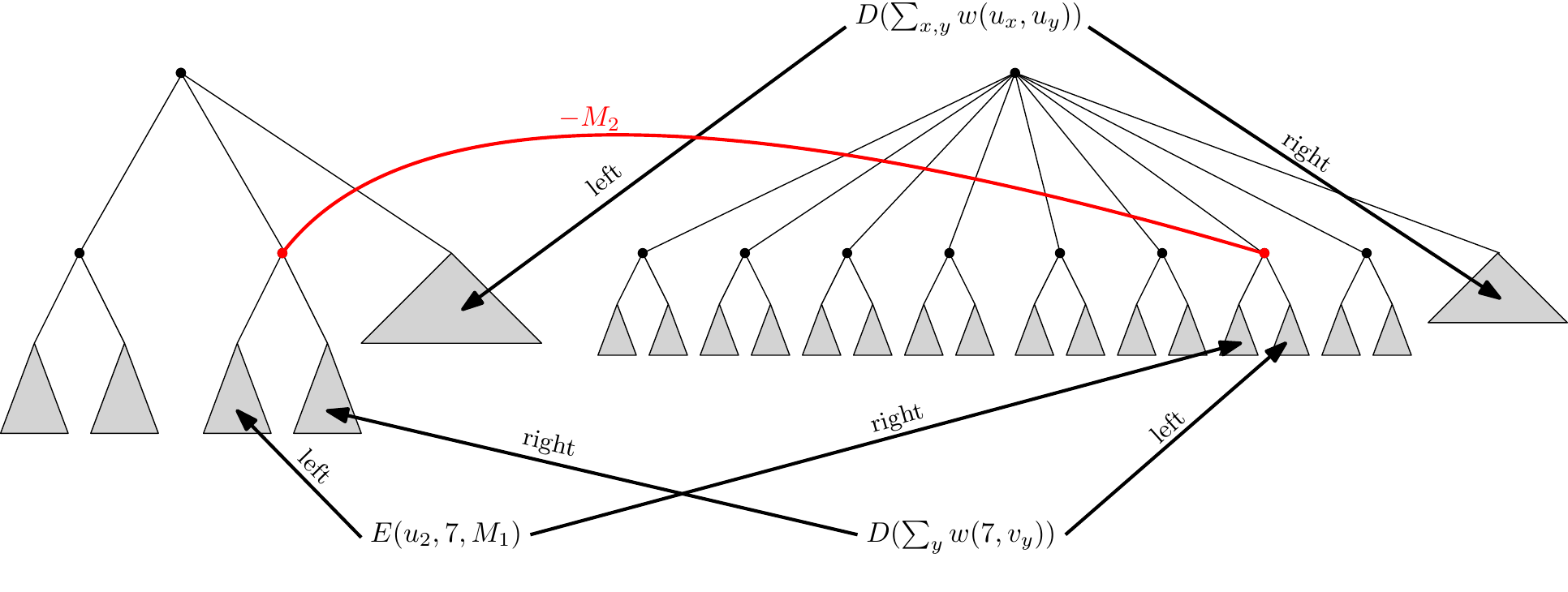}
\caption{Schematic illustration of a connection gadget for $k/3=2$ and $n=8$.\label{fig:connection}}
\end{center}
\end{figure}

We can clearly construct a solution with total cost $-M_{2}\cdot k/3-M_{1}\cdot n\cdot k/3-W(i,j)$
(because we have enumerated the clique corresponding to $i$ so that $u_{1}<u_{2}<\ldots<u_{k/3}$).
We claim that, for appropriately chosen $M_{1}$ and $M_{2}$, no better solution is possible.
Let $W=\sum_{u,v}w(u,v)$.
We fix $M_{1}=W \cdot (k/3+1)$. This is enough to guarantee
that the total cost contributed by nodes in all decrease gadgets is at least $-M_{1}$.
The total cost contributed by nodes in all equality gadgets is at least $-M_{1}\cdot n\cdot k/3$.
Consequently, setting $M_{2}=M_{1}\cdot n\cdot k/3+M_{1}$ guarantees that any optimal
solution must match all children of the left root, so in fact, for every $x=1,2,\ldots,k/3$ we
must match the $x$-th child of the left root to some child of the right root.
Because matching the left tree of any decrease gadget contributes at least $-W$ to the
total cost, by the choice of $M_{1}$ an optimal solution in fact must match
the $x$-th child of the left root with the $u_{x}$-th child of the right root, as otherwise
we lose at least $M_{1}$ due to the corresponding equality gadget that cannot be compensated
by matching its accompanying decrease gadget.
Finally, the corresponding decrease gadget adds $-\sum_{y}w(u_{x},v_{y})$ to
the total cost. Therefore, as long as $M\geq M_{2}\cdot k/3+M_{1}\cdot n\cdot k/3$
the total cost is indeed $-M-W(i,j)$ after choosing the cost of matching the roots to
be $-M+M_{2}\cdot k/3+M_{1}\cdot n\cdot k/3$. 
For any node in a decrease gadget, the cost of matching is at least $-W$, for any node
in an equality gadget, the cost of matching is $-M_{1}$, and finally the cost of matching
the children of the roots is $-M_{2}$, so the cost of matching any node of $C(i,j,W)$
is at least $-M$. For the correctness of the construction it is enough that $M$ is at least
\begin{align*}
M_{2}\cdot k/3+M_{1}\cdot n\cdot k/3 =& M_{1}((nk/3+1)k/3+nk/3)\\
=& W(k/3+1)k/3(nk/3+1+n)\\
=&W(k/3+1)k/3(n(k/3+1)+1)\\
\leq& W\cdot n(k/3+1)^{3} = n^{O(ck)}.
\end{align*} 
 
\paragraph{Micro structures $A'_{i},D_{z'},B_{z},C'_{j}$.} We only explain how to construct
$A'_{i}$ and $D_{z'}$, for any $i=1,2,\ldots,N$ and $z'=1,2,\ldots,N$, as the construction
of $B_{z}$ and $C'_{j}$ is symmetric. Recall that we require $\cmatch(A'_{i},D_{z'})=-M^{6}-M^{4}(N-i)-W(i,z')$
and for every node in $A'_{i}$ and $D_{z'}$ the cost of matching should be at least $-M^{6}$.

$A'_{i}$ consists of a root to which we attach the left tree of $D(M^{6}+M^{4}(N-i)-M)$ and
the left tree of $C(i,\cdot,M)$, while $D_{z'}$ consists of a root to which we attach the right
tree of $D(\cdot)$ and the right tree of $C(\cdot,z',M)$. All decrease gadgets attached
as the left children of $A'_{i}$ and $D_{z'}$ are of the same type, and all decrease gadgets
used inside the connection gadgets attached as the right children are also of the same
but other type. This guarantees that the total cost of matching $A'_{i}$ to $D_{z'}$
is simply $-M^{6}-M^{4}(N-i)+M-M-W(i,j)=-M^{6}-M^{4}(N-i)-W(i,j)$. For sufficiently
large $M\geq W\cdot n(k/3+1)^{3}$, 
the cost of matching any node in $D(M^{6}+M^{4}(N-i)-M)$ is at least $-M^{6}$
and the cost of matching any node in $C(i,j,M)$ is at least $M$.

\paragraph{Micro structures $A_{i},C_{j}$.} Here the situation is a bit more complex,
as we simultaneously require that $\cmatch(A_{i},C_{j})=-M^{2}-W(j,i)+W(j-1,i-1)$
for every $i=2,3,\ldots,N$ and $j=2,3,\ldots,N$ and
$\cmatch(A_{i},C_{1})=-M^{5}-M^{3}(i-1)-W(1,i)$, and $\cmatch(A_{1},C_{j})=-M^{5}-M^{3}(j-1)-W(j,1)$
for every $i=1,2,\ldots,N$ and $j=1,2,\ldots,N$. We must also make sure that
the cost of matching a node of $A_{i}$ to a node of $C_{j}$ should be at least $-M^{2}$, except
that the root of $A_{i}$ ($C_{j}$) can be matched to the root of $C_{1}$ ($A_{1}$) 
with cost larger than $-M^{5}-M$ but at most $-M^{5}$
and, for any non-root node of $A_i$ ($C_{j}$) and for any non-root node of $C_{1}$ ($A_{1}$),
the cost of matching is larger than $-M^{4}$.

For every $i>1$ ($j>1$), $A_{i}$ ($C_{j}$)
consists of two subtrees, called left and right, attached to the common root, while $A_{1}$ ($C_{1}$)
consists of a single subtree connected to a root. For every $i>1$ ($j>1$), the left subtree of $A_{i}$
(the right subtree of $C_{j}$) consists of a root with two subtrees, called left-right and left-right
(right-left and right-right). 
For every $i>1$, nodes of the right subtree of $A_{i}$ can only be matched to nodes of $C_{1}$ and nodes
of the left subtree  of $A_{i}$ can only be matched to nodes of the right subtree of $C_{j}$ for any $j>1$.
For every $j>1$, nodes of the left subtree of $C_{j}$ can only be matched to nodes of $A_{1}$ and nodes
of the right subtree of $C_{j}$  can only be matched to nodes of the left subtree of $A_{i}$ for any $i>1$.
Nodes of $A_{1}$ can be matched to nodes of the left
subtree of $C_{j}$, for any $j>1$.
Nodes of $C_{1}$ can be matched to nodes of the right
subtree of $A_{i}$, for any $i>1$.
Additionally, the root of $A_{1}$ can be matched to the root of $C_{1}$ with cost
$-M^{5}-W(1,1)>-M^{5}-M$, and
for any $i>1$ ($j>1$), the root of $A_{i}$ ($C_{j}$) can be matched to the root of $C_{1}$ ($A_{1}$)
with cost $-M^{5}$. The subtrees are constructed as follows:
\begin{enumerate}
\item the right subtree of $A_{i}$ is the left tree of $D(M^{3}(i-1)+W(1,i))$,
\item the only subtree of $A_{1}$ is the right tree of $D(\cdot)$,
\item the left subtree of $C_{j}$ is the left tree of $D(M^{3}(j-1)+W(j,1))$,
\item the only subtree of $C_{1}$ is the right tree of $D(\cdot)$.
\end{enumerate}
It remains to fully define the left subtree of every $A_{i}$ and the right subtree of
every $C_{j}$, for $i,j>1$.
Recall that the goal is to ensure $\cmatch(A_{i},C_{j})=-M^{2}-W(j,i)+W(j-1,i-1)$.
We define a new $n$-node graph with weight function $w'(u,v)=M-w(u,v)$ for any
$u\neq v$ (for sufficiently large $M$, the new weights are positive). Then, let
$C'(i,j,M)$ denote the connection gadget $C(i,j,M)$ constructed
for the new graph. Nodes of the left-left (left-right) subtree
of $A_{i}$ can be only matched to nodes of the right-left (right-right) subtree of $C_{j}$.
The subtrees are constructed as follows:
\begin{enumerate}
\item the left-left subtree of $A_{i}$ is the left tree of $C(i,\cdot,M\cdot (k/3)^{2})$,
\item the right-left subtree of $C_{j}$ is the right tree of $C(\cdot,j,M\cdot (k/3)^{2})$,
\item the left-right subtree of $A_{i}$ is the left tree of $C'(i-1,\cdot,M^{2})$,
\item the right-right subtree of $C_{j}$ is the right tree of $C'(\cdot,j-1,M^{2})$.
\end{enumerate}
See Figure~\ref{fig:microac}. For the construction of $C(i,j,M\cdot (k/3)^{2})$ and
$C(i-1,j-1,M^{2})$ to be correct we need that $M \cdot (k/3)^{2}\geq W\cdot n(k/3+1)^{3}$
and $M^{2} \geq M\cdot n^{2} \cdot n(k/3+1)^{3}$, respectively, which holds for sufficiently
large $M$.

Now we calculate $\cmatch(A_{i},C_{j})$. $C(i-1,j-1,M^{2})$ contributes $-M^{2}$ minus the total cost of edges
connecting two subsets of $k/3$ nodes in the new graph. As the weights in the new
graph are defined as $w'(u,v)=M-w(u,v)$, this is exactly $-M^{2}-(M\cdot (k/3)^{2}-W(i-1,j-1))$.
$C(i,j,M\cdot (k/3)^{2})$ contributes $-M\cdot (k/3)^{2}-W(i,j)$, so
$\cmatch(A_{i},C_{j})=-M\cdot (k/3)^{2}-W(i,j)-M^{2}-(M\cdot (k/3)^{2}-W(i-1,j-1))=-M^{2}-W(i,j)+W(i-1,j-1)$
as required.

It remains to bound the cost of matching nodes.
Nodes in the left subtree of $A_{i}$ ($C_{j}$) can be matched only to nodes of $C_{1}$
($A_{1}$) with cost at least $-M^{3}\cdot n > -M^{4}$, except that the roots can be
matched with cost $-M^{5}$. The cost of matching a node of $A_{i}$ to a node of $C_{j}$,
for $i,j>1$, is either at least $-M\cdot (k/3)^{2}$ (for the nodes of $C(i,j,M\cdot (k/3)^{2}$)
or at least $-M^{2}$ (for the nodes of $C(i,j,M^{2}))$, so for sufficiently large $M$ at least $-M^{2}$.

\begin{figure}[h!]
\begin{center}
\includegraphics[scale=0.8]{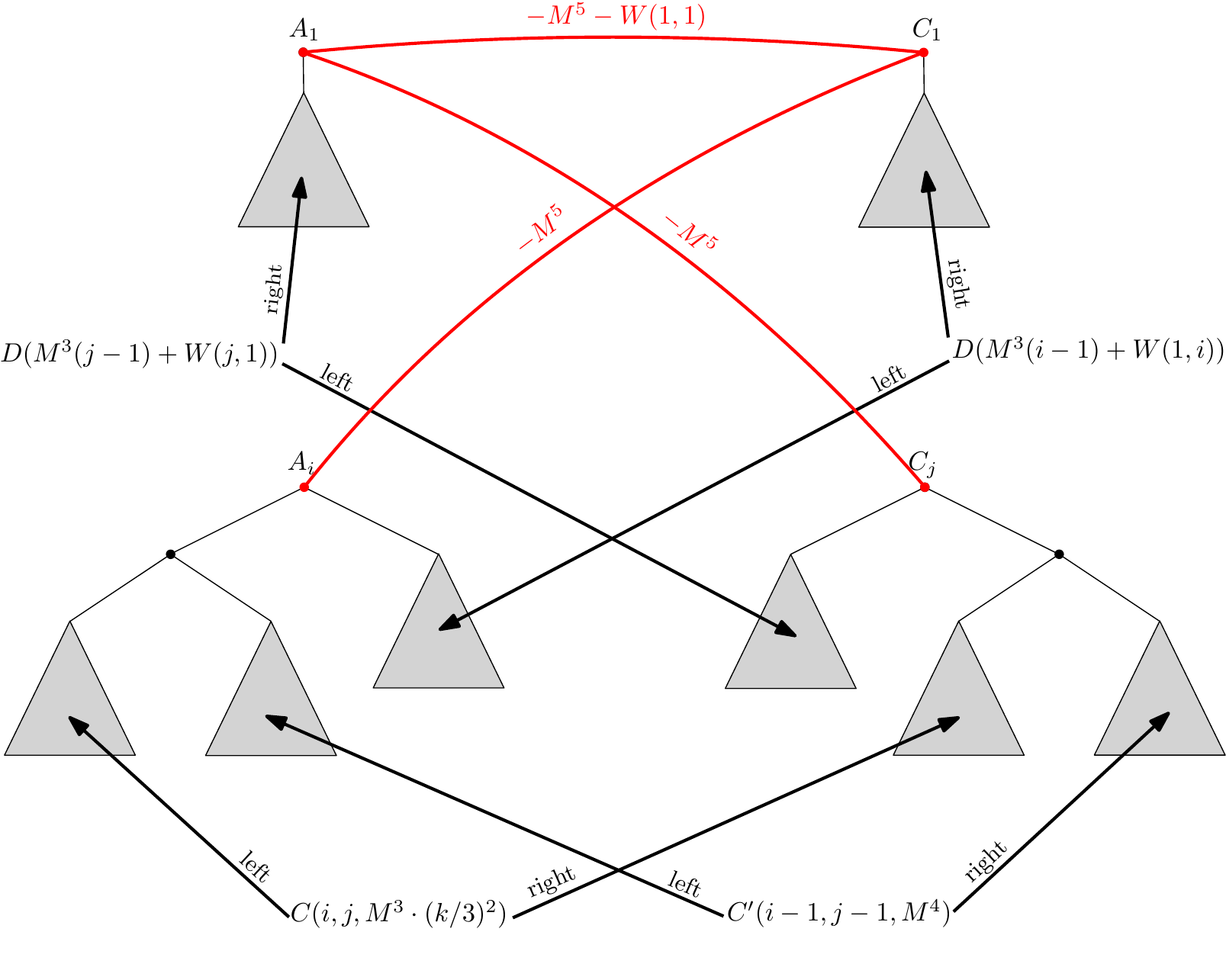}
\caption{\label{fig:microac} Micro structures $A_{1},C_{1}$ and $A_{i},C_{j}$ for $i,j>1$.}
\end{center}
\end{figure}

\paragraph{Wrapping up.} We have shown how to construct, given a complete undirected $n$-node
graph $G$, two trees such that the weight of the max-weight $k$-clique in $G$ can be extracted from
the cost of an optimal matching (and, as mentioned in the beginning of Section~\ref{sec:apsp}, by
a simple transformation this is equal to the edit distance).
To complete the proof of Lemma~\ref{lem:kcliquereduction}, we need
to bound the size of both trees and also the size of the alphabet used to label their nodes.

Initially, each tree consists of $2N$ original spine nodes, where $N\leq n^{k/3}$, $2N$ leaf nodes,
and $N$ additional spine nodes. Then, we attach appropriate microstructures to the original
spine nodes and leaf nodes. The microstructures are $A'_{i},I,C'_{j},A_{i},B_{z},C_{j},D_{z'}$. 
Every copy of $I$ consists of $k/3\cdot n$ nodes.
To analyse the size of the remaining microstructures, first note that if $x\in\{0,\ldots,n^{d}\}$ then the decrease
gadget $D(x)$ consists of $O(d\cdot n)$ nodes. The equality gadget always consists of $O(n)$ nodes.
Finally, the connection gadget $E(\cdot,\cdot,M)$ with $M\in\{0,\ldots,n^{d}\}$ consists of
$O(n(n+d\cdot n)+d\cdot n)=O(d\cdot n^{2})$ nodes. 
Let $M=n^{d}$ with $d$ to be specified later. Now, the size of the microstructures can be
bounded as follows:
$A'_{i}$ and $D_{z'}$ (and also $B_{z}$ and $C'_{j}$) consist of $O(6d\cdot n+d\cdot n^{2})=O(d\cdot n^{2})$ nodes.
The right subtree of $A_{i}$ (and the left subtree of $C_{j}$) consists of $O(3d\cdot n)$ nodes,
while the left subtree of $A_{i}$ (and the right subtree of $C_{j}$) consist of
$O(k^{2}d\cdot n^{2}+2d\cdot n^{2})=O(k^{2}d\cdot n^{2})$ nodes.
Thus, the total size of all microstructures is $O(N\cdot k^{2}d\cdot n^{2})$. It remains to bound $M$.
Recall that we require $M\geq W\cdot n(k/3+1)^{3}$, $M\cdot (k/3)^{2}\geq W\cdot n(k/3+1)^{3}$
and $M \geq n^{3}(k/3+1)^{3}$, where $W \leq n^{2}\cdot n^{O(ck)}=n^{O(ck)}$. Hence, it is sufficient
that $M \geq 8 W \cdot n^{3}k^{3}$.
Setting $d=\Theta(ck)$ is therefore enough. The size of the whole instance thus is
$O(n^{k/3+2}\cdot ck)=O(n^{k/3+2})$.

We also have to bound the size of the alphabet. We need $k/3$ distinct labels for the nodes of~$I$.
We need $O(d)$ distinct labels for the nodes of all decrease gadgets of the same type. There is a constant
number of types, and all other nodes require only a constant number of distinct labels
(irrespectively on $c$ and $k$), so the total size of the alphabet is $O(ck)=O(1)$.

\section{Algorithm for Caterpillars on Small Alphabet}
\label{sec:algo}

In this section, we show that the hard instances of TED from Section~\ref{sec:apsp} can be solved
in time $O(n^{2}|\Sigma|^{2}\log n)$, where $n$ is the size of the trees and $\Sigma$ is the alphabet.
Recall that in such an instance we are given two trees $F$ and $G$ both consisting of a single path (called spine) of length $O(n)$ with a single leaf pending from every node, and all these leafs are to the right of the path in $F$ and to the left of the path in $G$ (see Figure~\ref{fig:macrogeneral}).
In the following we use the same notation as in Lemma~\ref{lem:redstructure}.
At a high level, we want to guess the rootmost non-spine node in the left tree $f'_{i_{p+1}}$ and 
the rootmost non-spine node in the right tree $g'_{j_{p+1}}$. The optimal matching of spine
nodes above these non-spine nodes can be precomputed in $O(n^{2})$ total time with a simple
dynamic programming algorithm. It might be tempting to say the same about the situation below,
but this is much more complicated due to the fact that leaf nodes in this part are matched in reversed order.
To overcome this difficulty, we need the following tool.

\begin{lemma}
\label{lem:reverse}
For strings $s[1..n]$ and $t[1..m]$ over alphabet $\Sigma$ and matching cost $\cmatch(c,d)$
for any two letters $c,d\in\Sigma$, we define the \emph{optimal matching of $s$ and the reverse of $t$} as 
\[ \min\Big\{ \sum_{\ell=1}^k \cmatch(s[i_\ell],t[j_\ell]) \Bigm\vert k \ge 0, \; 1 \le i_1 < \ldots < i_k \le n, \; 1 \le j_k < \ldots < j_1 \le m\Big\}. \]
Given two strings $s[1..n], t[1..n]$, in $O(n^{2}\log n)$ total time we can calculate, for every $i$ and $j$, the optimal matching of $s[1..i]$ and the reverse of $t[1..j]$.
\end{lemma}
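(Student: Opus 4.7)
The plan is to view $M[i][j]$, the optimal matching of $s[1..i]$ and the reverse of $t[1..j]$, as a minimum-weight descending chain in the $n\times n$ grid whose cell $(a,b)$ has weight $c_{a,b}:=\cmatch(s[a],t[b])$. A \emph{descending chain} is a sequence $(a_1,b_1),\ldots,(a_k,b_k)$ with $a_\ell$ strictly increasing and $b_\ell$ strictly decreasing; for the query $(i,j)$ we count only chains lying in $[1,i]\times[1,j]$, i.e.\ satisfying $a_k\le i$ and $b_1\le j$. The empty chain has weight $0$, and $M[i][j]$ is the minimum total weight of such a chain.

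The first step is a three-way decomposition based on the extremal indices of the chain: $M[i][j]=\min(M[i-1][j],\,M[i][j-1],\,N[i][j])$, where $N[i][j]$ collects chains that simultaneously use row $i$ and column $j$ (so $a_k=i$ and $b_1=j$). A length-one chain contributes $c_{i,j}$; a longer chain splits into a first pair $(a_1,j)$ with $a_1<i$, a last pair $(i,b_k)$ with $b_k<j$, and a (possibly empty) middle sub-chain confined to the open box $(a_1,i)\times(b_k,j)$.

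The main obstacle is that the query imposes two-sided constraints on the chain's extremes: the natural one-sided DP $\phi[a][b]=\text{min-cost chain with last pair }(a,b)$ gives only a loose lower bound, since its optimal chain may have $b_1>j$. My plan to overcome this is to sweep $i$ from $1$ to $n$ while maintaining a segment tree indexed by $b_1\in[1,n]$. After processing row $i$ the tree stores, for each value of $b_1$, the minimum cost of a chain with $a_k\le i$ and exactly that first $b$-coordinate; a query for $M[i][j]$ is then a prefix-minimum lookup up to position $j$. Extending the sweep from $i-1$ to $i$ requires, for each candidate $b_k$, producing the best new chain whose $a_k=i$ and inserting it at its $b_1$ key: this value is assembled as $c_{i,b_k}$ plus a segment-tree prefix-min over $b_1>b_k$ from the previous state plus $c_{a_1,b_1}$, each of which is a single $O(\log n)$ segment-tree operation, and the result is used to update $N[i][\cdot]$ and then $M[i][\cdot]$ in the same row-by-row pass.

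With $O(n)$ updates and $O(n)$ queries per row, and $n$ rows, the total running time is $O(n^{2}\log n)$. The trickiest point to verify is that the segment tree indeed carries enough information to reconstruct $N[i][j]$: by induction on $i$, the minimum-cost middle chain inside $(a_1,i)\times(b_k,j)$ must already be represented in the state at the end of row $i-1$, which reduces to the same recurrence applied at a strictly smaller row index and thus follows from the inductive hypothesis.
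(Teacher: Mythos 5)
Your proposal takes a genuinely different route from the paper, but it has a gap that I do not see how to close. The paper observes that a matching of $s[1..i]$ with the reverse of $t[1..j]$ corresponds to a path in a fixed $(n+1)\times(n+1)$ planar grid DAG from a source $v_{1,n+1-j}$ to a target $v_{i,n+1}$ (horizontal and vertical edges skip a character, diagonal edges pay a matching cost). Since all sources lie on the external face, Klein's multiple-source shortest paths algorithm computes a representation of all these distances in $O(n^2\log n)$ time, and each of the $O(n^2)$ queries takes $O(\log n)$.

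The concrete problem with your sweep is the state the segment tree carries. You index leaves by the first $b$-coordinate $b_1$ of a chain and store, per leaf, a single number: the minimum cost over chains with $a_k\le i-1$ and that fixed $b_1$. But extending to row $i$ by appending a new last pair $(i,b_k)$ requires the minimum cost over chains whose \emph{last} $b$-coordinate exceeds $b_k$ (otherwise the chain is not descending after the append), and you would also need to know that chain's $b_1$ to pick the correct insertion key. A single per-$b_1$ minimum cannot answer this query: the chain realizing $\Psi_{i-1}[b_1]$ may well end with a $b$-coordinate $\le b_k$ and thus not be extendable, while a worse chain with the same $b_1$ is. Put differently, your ``middle sub-chain confined to the open box $(a_1,i)\times(b_k,j)$'' is a four-parameter subproblem with $\Theta(n^4)$ instances, whereas the end-of-row state is only a one-parameter family, so the inductive hypothesis you invoke at the end does not match what the segment tree actually represents. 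I see no easy repair short of exploiting the planar structure as the paper does.
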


\begin{proof}
We construct an $(n+1)\times(n+1)$ grid graph on nodes $v_{i,j}$, where $i,j=0,1,\ldots,n$ as follows.
For every $i,j=0,1,\ldots,n$, we create a directed edge from $v_{i,j}$ to $v_{i+1,j}$ and $v_{i,j+1}$ with length
zero. Also, we create a directed edge from $v_{i,j}$ to $v_{i+1,j+1}$ with length $\cmatch(s[i],t[n+1-j])$.
Then, paths from $v_{1,n+1-j}$ to $v_{i,n+1}$ are in one-to-one correspondence with matchings
of $s[1..i]$ to the reverse of $t[1..j]$. Therefore, the cheapest such path corresponds to the optimal
matching. 

The grid is a planar directed graph, and all starting nodes $v_{1,n+1-j}$ lie on the external
face, so we can use the multiple-source shortest paths algorithm of Klein~\cite{Klein05} to compute,
in $O(n^{2}\log n)$ time, a representation of shortest paths from all starting nodes $v_{1,n+1-j}$
to all nodes of the grid.\footnote{In the presence of negative-length edges, Klein's algorithm requires an initial shortest paths tree from some node on the external face to all other nodes. In our case, computing this initial shortest path tree can easily be done in $O(n^2)$ time as our graph is a directed acyclic graph.} This representation can be then queried in $O(\log n)$ time to extract
the length of any path from $v_{1,n+1-j}$ to $v_{i,n+1}$. Thus, the total time is $O(n^{2}\log n)$.
\end{proof}

To see how Lemma~\ref{lem:reverse} can be helpful, consider the (simpler) case when there are no
additional spine nodes $f_{i_{p+q+2}}$ and $g_{j_{p+q+2}}$. We construct two strings $s$ and $t$
by writing down the labels of leaf nodes $f'_{n},f'_{n-1},\ldots,f'_{1}$ and $g'_{n},g'_{n-1},\ldots,g'_{1}$,
respectively, and preprocess them using Lemma~\ref{lem:reverse}. Then, to find the optimal matching
we guess $i_{p+2}$ and $j_{p+2}$. As explained above, optimal matching of spine nodes above
$f'_{i_{p+2}}$ and $g'_{j_{p+2}}$ can be precomputed in $O(n^{2})$ time in advance. Then, we need
to match some of the leaf nodes $f'_{i_{p+2}},f'_{i_{p+2}+1},f'_{i_{p+2}+2},\ldots$ to some of the
leaf nodes $g'_{j_{p+2}},g'_{j_{p+2}+1},g'_{j_{p+2}+2},\ldots$ in the reversed order. This exactly
corresponds to matching $s[1,n+1-i_{p+2}]$ to the reverse of $t[1,n+1-j_{p+2}]$ and thus is also precomputed.
Iterating over all possible $i_{p+2}$ and $j_{p+2}$ gives us the optimal matching in $O(n^{2}\log n)$
total time.

Now consider the general case. We assume that both optional spine nodes $f_{i_{p+q+2}}$ and
$g_{j_{p+q+2}}$ exist; if only one of them is present the algorithm is very similar. As in the simpler case,
we iterate over all possible $i_{p+1}$ and $j_{p+1}$. The natural next step would be to iterate
over all possible $i_{p+q+2}$ and $j_{p+q+2}$, but this is too expensive. However, because
no spine nor leaf nodes below $f_{i_{p+q+2}}$ (or $g_{j_{p+q+2}}$) are matched, we can as well
replace $f_{i_{p+q+2}}$ with the lowest spine node with the same label (and similarly for $g_{j_{p+q+2}}$).
Thus, instead of guessing $i_{p+q+2}$ we can guess the label of $f_{i_{p+q+2}}$ and choose the
lowest spine node with such label (and similarly for $j_{p+q+2}$). 
Now we retrieve the precomputed optimal matching of spine
nodes above $f'_{i_{p+1}}$ and $g'_{j_{p+1}}$. Then we need to find the optimal matching of
leaf nodes $f'_{i_{p+1}+1},f'_{i_{p+1}+2},\ldots,f'_{i_{p+q+2}-1}$ and
$g'_{i_{p+1}+1},g'_{i_{p+1}+2},\ldots,g'_{j_{p+q+2}-1}$. This can be precomputed in $O(n^{2}|\Sigma|^{2} \log n)$
time with Lemma~\ref{lem:reverse}. Indeed, there are only $|\Sigma|$ possibilities for
$i_{p+q+2}-1$ and also $|\Sigma|$ possibilities for $j_{p+q+2}-1$, as both of them are defined by
the lowest occurrence of a label among the spine nodes of the left and the right tree, respectively.
For each such combination, we construct two strings $s$ and $t$ by writing down the labels of
leaf nodes above $f'_{p+q+2}$ and $g'_{p+q+2}$ in the bottom-up order and preprocess them
in $O(n^{2}\log n)$ time. This allows us to retrieve the optimal matching of leaf nodes and
then we only have to add $\cmatch(f'_{i_{p+1}},g_{j_{p+q+2}})$ and $\cmatch(f_{i_{p+q+2}},g'_{j_{p+1}})$
to obtain the total cost. Thus, after $O(n^{2}|\Sigma|^{2} \log n)$ preprocessing, we can find
the optimal matching by iterating over $n^{2}|\Sigma|^{2}$ possibilities and checking each of
them in constant time.

\bibliographystyle{abbrv}

\end{document}